\newtheorem{thm}{Theorem}
\newtheorem{lem}{Lemma}
\newcommand{\expect}[1]{\mathbb{E}\left\{#1\right\}}
\newcommand{\defequiv}{\mbox{\raisebox{-.3ex}{$\overset{\vartriangle}{=}$}}}
\newcommand{\bv}[1]{{\boldsymbol{#1} }}
\newcommand{\script}[1]{{{\cal{#1} }}}
\begin{document}

\title
  {Stock Market Trading via Stochastic Network Optimization}
\author{Michael J. Neely \\ University of Southern California\\ http://www-rcf.usc.edu/$\sim$mjneely\\
\thanks{This material is supported in part  by one or more of 
the following: the DARPA IT-MANET program
grant W911NF-07-0028, 
the NSF Career grant CCF-0747525.}}
\markboth{}{Neely}

\maketitle

\begin{abstract} 
We consider the problem of dynamic buying and selling 
of shares from a collection of $N$ stocks with random price fluctuations. 
To limit investment risk, we place an upper bound on the total number
of shares kept at any time.   Assuming that prices evolve according
to an ergodic process with a mild \emph{decaying memory property}, and assuming 
constraints on the total number of shares that can be bought and sold at
any time, we develop 
a trading policy that comes arbitrarily close to achieving the profit of an ideal policy that has
perfect knowledge of future events.   Proximity to the optimal profit comes with a corresponding
tradeoff in the maximum required stock level and in the timescales associated with convergence.  
We then consider arbitrary (possibly non-ergodic) price processes, and show that the same 
algorithm comes close to the profit of a frame based policy that can look 
a fixed number of slots into the future. 
Our analysis uses 
techniques of Lyapunov Optimization that we originally 
developed for stochastic network optimization problems.
\end{abstract} 
\begin{keywords} 
Queueing analysis, stochastic control, universal algorithms
\end{keywords}

\section{Introduction}

This paper considers the problem of stock trading in an economic market with $N$ stocks.
We treat the problem in discrete time with normalized time slots $t \in \{0, 1, 2, \ldots\}$, where
buying and selling transactions are conducted on each slot.  
Let $\bv{Q}(t) = (Q_1(t), \ldots, Q_N(t))$ 
be a vector of the current number of shares owned of each stock, called the \emph{stock queue}. That is, 
for each $n \in \{1, \ldots, N\}$, the value of $Q_n(t)$ is an integer that represents the number of 
shares of stock $n$.   
Stock prices are given by a vector $\bv{p}(t) = (p_1(t), \ldots, p_N(t))$ and are assumed
to evolve randomly,  with mild assumptions to be made precise in later  sections. 
Each buy and sell transaction incurs trading costs.  Stocks can be sold and purchased on every slot.
Let  $\phi(t)$ represent the net profit on slot $t$ (after transaction costs are paid). 
The goal is to design a trading policy that maximizes
the long term time average of $\phi(t)$.    

For this system model, we enforce the additional constraint that 
at most $\mu_n^{max}$ shares of each stock $n$ can be bought and sold on a given slot.  
This ensures that our trading decisions only gradually change the portfolio allocation. 
While this $\mu_n^{max}$ constraint can significantly limit the ability to take advantage of 
desirable prices, and hence limits 
the maximum possible long term profit, we show that it can also reduce investment
risk. 
Specifically, subject to the $\mu_n^{max}$ constraint, 
we develop an algorithm that achieves a time average profit that 
is arbitrarily close to optimal, with a tradeoff in the maximum number of shares $Q_n^{max}$ 
required for stock $n$.    The $Q_n^{max}$ values can be chosen as desired to limit  
the  losses  from a  potential collapse of  one or more of the stocks.  
 It also impacts the timescales over which profit is accumulated, where smaller $Q_n^{max}$ levels
 lead to faster convergence times.     
 
It is important to note that long term wealth typically grows \emph{exponentially} when the $Q_n^{max}$ and 
$\mu_n^{max}$ constraints are removed.  In contrast, it can be shown that 
these $Q_n^{max}$ and $\mu_n^{max}$ constraints restrict wealth to at most a  \emph{linear growth}. 
Therefore, using $Q_n^{max}$ and $\mu_n^{max}$ to limit investment risk  unfortunately has a dramatic 
impact  on the long term growth curve.  
However, our ability to bound the timescales over which wealth is earned suggests that our strategy 
may be useful in cases when, in addition to a good long-term return, we also desire noticeable and consistent
short-term gains. 
At the end of this paper, we briefly describe a  
modified strategy that increases $Q_n^{max}$ and $\mu_n^{max}$ as wealth progresses, with the goal of 
achieving noticeable short-term gains while enabling exponential wealth increase.

Our approach uses the Lyapunov optimization theory developed
for stochastic queueing networks in our previous work 
\cite{now}\cite{neely-energy-it}\cite{neely-thesis}.   
Specifically, the work \cite{now}\cite{neely-energy-it}\cite{neely-thesis} develops
resource allocation and scheduling policies for communication and queueing networks with 
random traffic and channels.  The policies can maximize time average throughput-utility  
and minimize time average power expenditure, as well as optimize more general time average attributes, 
without a-priori knowledge of the traffic and channel probabilities.  The algorithms
continuously adapt to emerging conditions, and are robust to non-ergodic changes in the probability
distributions \cite{neely-mesh}.  This suggests that similar control techniques can be used successfully
for stock trading problems.  The difference is that the queues associated with stock shares are controlled
to have \emph{positive drift} (pushing them towards the maximum queue size), rather than \emph{negative
drift} (which would push them in the direction of the empty state).  

The \emph{Dynamic Trading Algorithm} that we develop from these techniques can be intuitively
viewed as a variation on a theme of \emph{dollar cost averaging}, where price downturns are exploited
by purchasing more stock.   However, the actual amount of stock that we buy and sell on each slot 
is determined by a constrained optimization of a max-weight functional 
that incorporates transaction costs, current prices, and current stock queue levels.

Much prior work on financial analysis and portfolio optimization assumes a known probability model for 
stock prices.   Classical portfolio optimization techniques by Markowitz \cite{markowitz52} and Sharpe \cite{sharpe63} 
construct portfolio allocations over $N$ stocks to maximize profit subject to variance constraints (which model risk)
over one investment period (see also \cite{risk-book} and references therein). 
Solutions to this problem can be calculated if the mean and covariance of stock returns
are known.    Samuelson considers multi-period problems in \cite{samuelson69} using 
dynamic programming, assuming a known product form distribution for investment returns. 
 Cover in \cite{cover-stock84}
develops an iterative
procedure that converges to the constant portfolio allocation that maximizes the expected log investment return, 
assuming a known probability distribution that is the same on each period.  Recent work by
Rudoy and Rohrs in \cite{rudoy-cdc} \cite{rudoy-thesis} 
considers risk-aware optimization  with a more complex \emph{cointegrated vector
autorgressive} assumption on stock processes,  and uses Monte Carlo
simulations over historical stock trajectories to inform stochastic decisions. 
Stochastic models of stock prices using L\'{e}vy processes and  multi-fractal processes
are considered in \cite{risk-book} \cite{stock-multi-fractal} \cite{distribution-differences} and references therein. 

 A significant departure from this work is the 
 \emph{universal stock trading paradigm},  as exemplified in 
 prior works of Cover and Gluss \cite{cover-gluss86},  Larson \cite{larson-stock-thesis}, 
 Cover \cite{cover-universal}, Merhav and Feder \cite{merhav-universal}, and 
 Cover and Ordentlich \cite{universal-stock2} \cite{max-min-stock}, 
 where trading algorithms are developed and shown to provide
analytical guarantees for \emph{any sample path} of stock prices.  Specifically, the work in 
\cite{cover-gluss86}-\cite{max-min-stock} seeks to find a non-anticipating trading algorithm that yields
the same growth exponent as the best constant portfolio allocation, where the constant can be optimized with full 
knowledge of the future.   The works in \cite{cover-gluss86}\cite{larson-stock-thesis} develop algorithms 
that come close to the optimal exponent, and the work in \cite{cover-universal} \emph{achieves} the optimal exponent
under a mild \emph{active stock} assumption on the price sample paths.  Similar results are derived
in \cite{merhav-universal} using a general framework of sequential decision theory. 
Related results are derived in \cite{universal-stock2} \cite{max-min-stock}
without the \emph{active stock} assumption, 
where \cite{max-min-stock} also treats
max-min performance when stock prices are chosen by an adversary.

Our work is similar in spirit to this universal trading paradigm, in 
that we do not base decisions on a known (or estimated) probability distribution.   However, our context 
and solution methodology is very different.  Indeed, 
the works in \cite{cover-gluss86}-\cite{max-min-stock} assume that the entire stock portfolio can be sold and reallocated
on every time period, and 
allow stock holdings to grow arbitrarily large.  This means that the accumulated profit is always 
at risk of one or more stock failures.  In our work, we take a more conservative approach that restricts reallocation to
gradual changes, and that pockets profits while holding 
no more than $Q_n^{max}$ shares of each stock $n$.   
We also explicitly account for trading costs and integer constraints on stock shares, which is
not considered in the works \cite{cover-gluss86}-\cite{max-min-stock}.  
In this context, we first design an algorithm under the assumption that prices are ergodic with an
unknown distribution.   In this case, we develop a simple non-anticipating 
algorithm that comes arbitrarily close to the optimal time
average profit that could be earned by an ideal policy with complete knowledge of the future. 
The ideal policy used for comparison 
can make different allocations at different times, and is not restricted to constant allocations 
as considered in \cite{cover-gluss86}-\cite{max-min-stock}. 
We then show that the \emph{same} algorithm can be used for general price sample paths, 
even non-ergodic sample paths without well defined time averages.  A more conservative 
guarantee is shown in this case:  The  algorithm 
yields profit that is arbitrarily close to that of a frame based policy with ``$T$-slot lookahead,'' where the future
is known up to $T$ slots.   Our approach
is inspired by Lyapunov optimization and 
decision theory for stochastic queueing networks \cite{now}.  However, the 
Lyapunov theory we use here involves sample path techniques that 
are different from those in \cite{now}.  These techniques might have broader impacts on  
queueing problems in other areas.

In the next section we present the system model. 
In Section \ref{section:iid} we develop the Dynamic Trading Algorithm and analyze performance
for the simple (and possibly unrealistic) case when price vectors $\bv{p}(t)$ are ergodic and i.i.d. over slots. 
While this i.i.d. case does \emph{not} accurately model actual stock prices, its analysis provides
valuable insight.  Section \ref{section:non-iid} expands the analysis to show the same algorithm can handle 
more general ergodic processes with a mild
\emph{decaying memory property}.  
Section \ref{section:arbitrary-prices} shows the algorithm
also provides  performance guarantees for completely arbitrary price processes (possibly non-ergodic). 
A simple enhancement that reduces startup cost is treated in Section \ref{section:place-holder}, and
Section \ref{section:splits}  briefly considers an extension that allows for exponential wealth increase
by gradually scaling the $\mu_n^{max}$ and $Q_n^{max}$ parameters. 

\section{System Model} 

Let $\bv{A}(t) = (A_1(t), \ldots, A_N(t))$
be a vector of decision variables representing the number of new shares purchased for each stock on slot $t$, 
and let $\bv{\mu}(t) = (\mu_1(t), \ldots, \mu_N(t))$ be a vector representing 
the number of shares sold on slot $t$.  The values
$A_n(t)$ and $\mu_n(t)$ are non-negative integers for each $n \in \{1, \ldots, N\}$. 
Each purchase of $A$ new shares of stock $n$ incurs a transaction cost $b_n(A)$ (called the \emph{buying cost function}).  Likewise, each sale of $\mu$ shares of stock $n$ incurs a transaction cost $s_n(\mu)$
(called the \emph{selling cost function}).  The functions $b_n(A)$ and $s_n(\mu)$ are arbitrary, and are 
assumed only to satisfy $b_n(0) = s_n(0) = 0$,  and to be non-negative, non-decreasing,  
and bounded by finite constants
$b_n^{max}$ and $s_n^{max}$,  
so that: 
\begin{eqnarray*}
0 \leq b_n(A) \leq b_n^{max} & \mbox{ for $0 \leq A \leq \mu_n^{max}$} \\
0 \leq s_n(\mu) \leq s_n^{max} & \mbox{ for $0 \leq \mu \leq \mu_n^{max}$} 
\end{eqnarray*}
where for each $n \in \{1, \ldots, N\}$, $\mu_n^{max}$ is a positive integer that limits the amount of shares of stock $n$
that can be bought and sold on slot $t$.  

\subsection{Example Transaction Cost Functions} 

The functions $b_n(A)$ might be \emph{linear}, representing a transaction fee that 
charges per share purchased.  Another example is a \emph{fixed cost model} with some 
fixed positive fee $b_n$, so that: 
\[ b_n(A) = \left\{ \begin{array}{ll}
                          b_n &\mbox{ if $A > 0$} \\
                             0  & \mbox{ if $A=0$} 
                            \end{array}
                                 \right. \]
 Similar models can be used for the $s_n(\mu)$ function. 
 The simplest model of all is the \emph{zero transaction cost model} where the functions $b_n(A)$ and 
$s_n(\mu)$ are identically zero. 

\subsection{System Dynamics} 

The stock price vector $\bv{p}(t)$ is assumed to be a
random vector process that takes values in some finite set $\script{P} \subset \mathbb{R}^N$, where $\script{P}$
can have an arbitrarily large number of elements.\footnote{The cardinality of the set $\script{P}$ does not enter into our
analysis.  We assume it is finite only for the convenience of claiming that the supremum time average profit 
$\phi^{opt}$ is achievable by a single ``$p$-only'' policy, as described in Section \ref{section:p-only}.  Theorems
\ref{thm:1}, \ref{thm:2}, \ref{thm:3} are unchanged if the set  $\script{P}$  is infinite, although the proofs of Theorems \ref{thm:1} 
and \ref{thm:2} would require an additional limiting argument over $p$-only policies that approach $\phi^{opt}$.}
For each $n$, let  $p_n^{max}$ represent a bound on  $p_n(t)$,   so that: 
\begin{eqnarray}
 0 \leq p_n(t) \leq p_n^{max} &  \mbox{ for all $t$ and all $n\in\{1, \ldots, N\}$} \label{eq:price-bound}
 \end{eqnarray}
We assume that 
buying and selling decisions can be made on each slot $t$ based on knowledge of $\bv{p}(t)$. 
The selling decision variables $\bv{\mu}(t)$ are made every slot $t$ subject to the following constraints: 
\begin{eqnarray} 
\mu_n(t) \in \{0, 1, \ldots, \mu_n^{max} \} & \mbox{ for all $n \in \{1, \ldots, N\}$} \label{eq:mu1} \\
\mu_n(t)p_n(t) \geq s_n(\mu_n(t)) & \mbox{ for all $n \in \{1, \ldots, N\}$} \label{eq:mu2} \\
\mu_n(t)\leq Q_n(t)   & \mbox{ for all $n \in \{1, \ldots, N\}$}  \label{eq:mu3} 
\end{eqnarray}
Constraint (\ref{eq:mu1}) ensures that no more than $\mu_n^{max}$ shares can be sold of any stock on a single slot. 
Constraint (\ref{eq:mu2}) restricts to the reasonable case when the money earned
from the sale of a stock must be larger than the transaction fee associated with the sale (violating
this constraint would clearly be sub-optimal).\footnote{Constraint (\ref{eq:mu2}) can be augmented
by allowing equality only if $\mu_n(t) =0$.}  Constraint (\ref{eq:mu3}) 
requires the number of shares sold to be less than or equal to the current number owned.

The buying decision variables $\bv{A}(t)$ are constrained as follows: 
\begin{eqnarray} 
A_n(t) \in \{0, 1, 2, \ldots, \mu_n^{max} \} \: \: \mbox{ for all $n \in \{1, \ldots, N\}$} \label{eq:A1} \\
&\hspace{-3in} \sum_{n=1}^N A_n(t)p_n(t) \leq x & \label{eq:A2} 
\end{eqnarray} 
where $x$ is a positive value that bounds the total amount of money used for purchases on slot $t$. 
For simplicity, we assume there is always at least a minimum of $x$ and 
$\sum_{n=1}^N [\mu_n^{max}p_n^{max} + b_n(\mu_n^{max})]$
dollars available for making purchasing decisions. This model can be augmented by 
adding a \emph{checking account queue} $Q_0(t)$ from which we must draw money to make purchases, although
we omit this aspect for brevity. 

The resulting queueing dynamics for the stock queues $Q_n(t)$ for $n \in \{1, \ldots, N\}$
are thus: 
\begin{eqnarray}
Q_n(t+1) =  \max[Q_n(t) - \mu_n(t) + A_n(t), 0]  \label{eq:q-dynamics} 
\end{eqnarray}
Strictly speaking, the $\max[\cdot, 0]$ operator in the above 
dynamic equation is redundant,  because the constraint  (\ref{eq:mu3}) 
ensures that the  argument inside the $\max[\cdot, 0]$ operator is non-negative. 
However, the $\max[\cdot, 0]$ shall be useful for mathematical analysis when we 
compare our strategy to that of a queue-independent strategy that neglects constraint (\ref{eq:mu3}). 

\subsection{The Maximum Profit Objective} 

Define $\phi(t)$ as the net profit on slot $t$: 
\begin{eqnarray}
\phi(t) &\defequiv& \sum_{n=1}^{N}[\mu_n(t)p_n(t) - s_n(\mu_n(t))]  \nonumber \\
&& -\sum_{n=1}^{N} [A_n(t)p_n(t) + b_n(A_n(t))]  \label{eq:phi} 
\end{eqnarray} 
Define $\overline{\phi}$ as the time average expected value of $\phi(t)$ under a given trading
algorithm (temporarily assumed to have a well defined limit): 
\[ \overline{\phi} \defequiv \lim_{t\rightarrow\infty} \frac{1}{t}\sum_{\tau=0}^{t-1} \expect{\phi(\tau)} \]
The goal is to design a trading policy that maximizes $\overline{\phi}$.  
It is clear that the trivial  strategy that chooses $\bv{\mu}(t) = \bv{A}(t) = \bv{0}$ for all $t$
yields $\phi(t) = 0$ for all $t$, and results in  $\overline{\phi} = 0$. 
Therefore, we desire our algorithm to produce a long term profit that satisfies
$\overline{\phi} > 0$.

\subsection{Discussion of Constraints} 
If we set $x \defequiv \sum_{n=1}^N [\mu_n^{max}p_n^{max} + b_n(\mu_n^{max})]$, then 
constraint (\ref{eq:A2}) is redundant and
can be removed. In this case, the multi-stock problem 
completely decouples into separate problems of optimally trading on each of the individual stocks. 
Trading on just a single stock is itself an important problem that can be viewed as a special 
case of our system model.    We add the constraint (\ref{eq:A2}) for multi-stock problems
as it can be used to limit 
the total amount spent on new purchases on a single slot.  The constraint (\ref{eq:A2}) 
can lead to a complex decision on each slot that is related to the \emph{bounded knapsack problem}, 
as discussed in Section \ref{section:alg} after the description
of the Dynamic Trading Algorithm.  The formulation can be modified by replacing the constraint (\ref{eq:A2})
with the  following constraint that often yields a simpler implementation: 
\begin{eqnarray} 
&  \sum_{n=1}^N A_n(t) \leq A_{tot} & \label{eq:new-constraint-x}
\end{eqnarray}
where $A_{tot}$ is an integer that bounds the total number of stocks that can be bought on a single slot.

\subsection{The Stochastic Price Vector and $p$-only Policies} \label{section:p-only} 

We first assume the stochastic process $\bv{p}(t)$ has well defined time averages (this is generalized
to non-ergodic models in Section \ref{section:arbitrary-prices}).  Specifically, 
for each price vector $\bv{p}$ in the finite set $\script{P}$, 
we define $\pi(\bv{p})$ as the time average fraction of time that 
$\bv{p}(t) = \bv{p}$, so that: 
\begin{equation} \label{eq:time-avg-p} 
\lim_{t\rightarrow\infty}\frac{1}{t}\sum_{\tau=0}^{t-1} 1\{\bv{p}(\tau) = \bv{p}\} = \pi(\bv{p}) \: \: \mbox{with probability 1} 
\end{equation} 
where $1\{\bv{p}(\tau) = \bv{p}\}$ is an indicator function that is $1$ if $\bv{p}(\tau) = \bv{p}$, and zero otherwise.

 Define a \emph{$p$-only policy} as a buying and selling strategy that chooses
\emph{virtual} decision vectors  $\bv{A}^*(t)$ and $\bv{\mu}^*(t)$
as a stationary and possibly randomized function of $\bv{p}(t)$, constrained only by 
(\ref{eq:mu1})-(\ref{eq:mu2}) and (\ref{eq:A1})-(\ref{eq:A2}). 
That is, the virtual decision vectors $\bv{A}^*(t)$ and $\bv{\mu}^*(t)$ associated with a $p$-only 
policy do not necessarily satisfy the constraint  (\ref{eq:mu3}) 
that is required of the \emph{actual} decision vectors, and hence these decisions can be made independently
of the current stock queue levels. 

Under a given $p$-only policy, define the following time average expectations 
$d_n^*$ and $\phi^*$: 
\begin{eqnarray} 
d_n^* \defequiv \lim_{t\rightarrow\infty} \frac{1}{t}\sum_{\tau=0}^{t-1} \expect{A_n^*(\tau) - \mu_n^*(\tau)}  \label{eq:dn-star} 
\end{eqnarray} 
\begin{eqnarray}
\phi^* \defequiv \lim_{t\rightarrow\infty} \frac{1}{t}\sum_{\tau=0}^{t-1} \mathbb{E}\left\{  \sum_{n=1}^N [\mu_n^*(\tau)p_n(\tau) - s_n(\mu_n^*(\tau))] \right. \hspace{-.2in} \nonumber \\
- \left. \sum_{n=1}^N[ A_n^*(\tau)p_n(\tau) + b_n(A_n^*(\tau))] \right\} \hspace{.2in}  \label{eq:phi-star} 
\end{eqnarray} 
It is easy to see by (\ref{eq:time-avg-p}) that these time averages are well defined for any $p$-only policy. 
For each $n$, the value 
$d_n^*$ represents the \emph{virtual drift} of stock queue $Q_n(t)$ associated with the virtual
decisions $\bv{A}^*(t)$ and $\bv{\mu}^*(t)$. The value $\phi^*$ represents the \emph{virtual profit} 
under virtual decisions $\bv{A}^*(t)$ and $\bv{\mu}^*(t)$.  Note that the trivial $p$-only
policy $\bv{A}^*(t) = \bv{\mu}^*(t) = \bv{0}$ yields $d_n^*= 0$ for all $n$, and $\phi^* = 0$.  
Thus, we can define $\phi^{opt}$ as the 
supremum value of $\phi^*$ over all $p$-only policies that yield $d_n^* \geq 0$ for all $n$, and we note
that $\phi^{opt} \geq 0$.   
Using an argument similar to that given in \cite{neely-energy-it}, it can be shown that: 

\begin{enumerate} 
\item $\phi^{opt}$ is \emph{achievable} by a single $p$-only policy that satisfies $d_n^* = 0$ for all $n\in\{1, \ldots, N\}$. 
\item  $\phi^{opt}$ is greater than or equal to the supremum of the $\limsup$ time average
expectation of $\phi(t)$ that can be achieved over the class of 
all \emph{actual} policies that satisfy the constraints (\ref{eq:mu1})-(\ref{eq:A2}), 
including ideal policies that use perfect information about the future.   Thus, no policy can do better than $\phi^{opt}$.
\end{enumerate} 

That $\phi^{opt}$ is achievable by a single $p$-only policy (rather than by a limit of an infinite sequence of policies)
can be shown using the assumption that the set $\script{P}$ of all price vectors is finite. That $\phi^{opt}$ bounds
the time average profit of \emph{all} policies, including those that have perfect knowledge of the future, can be intuitively 
understood
by noting that the optimal profit is determined only by the time averages $\pi(\bv{p})$.  These time averages are 
the same (with probability 1) regardless of whether or not we know the future. 
The detailed proofs of these results are similar to those in  \cite{neely-energy-it}
and are provided in Appendix C.   In the next section we develop a \emph{Dynamic Trading Algorithm} that satisfies
the constraints (\ref{eq:mu1})-(\ref{eq:A2}) and that does not know the future or the  distribution $\pi(\bv{p})$, yet
yields time average profit that is arbitrarily close to $\phi^{opt}$.  

To develop our Dynamic Trading Algorithm, we first focus on the simple case when the vector $\bv{p}(t)$ is independent 
and identically distributed (i.i.d.) over slots, 
with a general probability distribution $\pi(\bv{p})$. 
This is an overly simplified model and does \emph{not} reflect actual stock time series data.  Indeed, a more accurate
model would be to assume the differences in the logarithm of prices are i.i.d. (see \cite{risk-book} and references therein). 
 However, we show in Section \ref{section:non-iid} that the \emph{same} algorithm developed for the simplified i.i.d. case 
can \emph{also} be used for a general class of ergodic but 
non-i.i.d. processes that have a mild
\emph{decaying memory property} (a property held by all processes that are modulated by finite state Markov chains). 
Section \ref{section:arbitrary-prices} shows the algorithm can also  treat arbitrary (possibly non-ergodic) price 
models.

\subsection{The i.i.d. Model} 

Suppose $\bv{p}(t)$ is i.i.d. over slots with $Pr[\bv{p}(t) = \bv{p}] = \pi(\bv{p})$ for all $\bv{p} \in \script{P}$.  
Because the value $\phi^{opt}$ is achievable by a single $p$-only policy, and because the expected values 
of any $p$-only policy are the same every slot under the i.i.d. model, we have the following: 
There is a $p$-only policy $\bv{A}^*(t)$, $\bv{\mu}^*(t)$ that yields for all $t$ and all $\bv{Q}(t)$: 
\begin{eqnarray} 
\expect{A_n^*(t) - \mu_n^*(t)\left|\right.\bv{Q}(t)} = 0 \label{eq:zero-drift-iid} 
\end{eqnarray} 
and
\begin{eqnarray}
&\hspace{-.7in}\mathbb{E}\left\{\sum_{n=1}^N [\mu_n^*(t)p_n(t) - s_n(\mu_n^*(t))]  \right. \nonumber \\
&\hspace{-.2in}\left.- \sum_{n=1}^N[ A_n^*(t)p_n(t) + b_n(A_n^*(t))]  \left|\right.\bv{Q}(t)\right\} = \phi^{opt} 
\label{eq:opt-profit-iid} 
\end{eqnarray}

\section{Constructing a Dynamic Trading Algorithm} \label{section:iid}

 The goal is to ensure that all stock queues $Q_n(t)$ are maintained at reasonably high levels so that there
 are typically enough shares available to sell if an opportune price should arise.   
 To this end, define
 $\theta_1, \ldots, \theta_n$ as positive real numbers that represent target queue sizes for the stock queues 
 (soon to be 
 related to the maximum queue size).  The particular values $\theta_1, \ldots, \theta_n$ shall be chosen later. 
 As a scalar measure of the distance each queue is away from its target value, we define the following
 \emph{Lyapunov function} $L(\bv{Q}(t))$: 
 \begin{equation} \label{eq:lyap-function} 
  L(\bv{Q}(t)) \defequiv \frac{1}{2}\sum_{n=1}^N(Q_n(t) -\theta_n)^2 
  \end{equation} 
  Suppose that $\bv{Q}(t)$ evolves according to some probability law, and 
  define $\Delta(\bv{Q}(t))$ as the \emph{one-slot conditional Lyapunov drift}:\footnote{Strictly speaking, 
 proper notation is $\Delta(\bv{Q}(t), t)$, as the drift may arise from a non-stationary algorithm. However, 
 we use the simpler notation
 $\Delta(\bv{Q}(t))$ as a formal representation of the right hand side of (\ref{eq:one-slot-drift}).} 
 \begin{equation} \label{eq:one-slot-drift} 
  \Delta(\bv{Q}(t)) \defequiv \expect{L(\bv{Q}(t+1)) - L(\bv{Q}(t))\left|\right.\bv{Q}(t)} 
  \end{equation} 
  As in the stochastic network optimization problems of \cite{now}\cite{neely-energy-it}\cite{neely-thesis}, 
  our approach is to take control actions on each slot $t$ to minimize a bound on the ``drift-minus-reward'' 
  expression: 
  \[ \Delta(\bv{Q}(t)) - V\expect{\phi(t)\left|\right.\bv{Q}(t)} \]
  where $V$ is a positive parameter to be chosen as desired to affect the proximity to the optimal 
  time average profit $\phi^{opt}$.    To this end, we first compute a bound on the 
 Lyapunov drift. 
 
 \begin{lem} \label{lem:lyap-drift} 
 (Lyapunov drift bound) For all 
 $t$ and all possible values of $\bv{Q}(t)$, we have: 
 \begin{eqnarray*}
 \Delta(\bv{Q}(t)) &\leq& B - \sum_{n=1}^N (Q_n(t) - \theta_n)\expect{\mu_n(t) - A_n(t)\left|\right.\bv{Q}(t)} 
 \end{eqnarray*}
 where $B$ is a finite constant that satisfies: 
 \begin{eqnarray}
 B \geq \frac{1}{2}\sum_{n=1}^N \expect{(\mu_n(t) - A_n(t))^2 \left|\right.\bv{Q}(t)}  \label{eq:B}
  \end{eqnarray}
 Such a finite constant $B$ exists because of the boundedness assumptions 
 on buy and sell variables $\mu_n(t)$ and $A_n(t)$.  In particular, we have: 
 \begin{equation} \label{eq:B-ineq}  
 B  \leq  \frac{1}{2}\sum_{n=1}^N (\mu_{n}^{max})^2 
 \end{equation} 
 \end{lem}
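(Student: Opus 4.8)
The plan is to obtain the drift bound by a direct one-slot expansion of the Lyapunov function, which is the standard technique for this kind of analysis. First I would fix a stock index $n$ and examine the queue update (\ref{eq:q-dynamics}). Because constraint (\ref{eq:mu3}) forces $\mu_n(t) \leq Q_n(t)$ and because $A_n(t) \geq 0$, the argument inside the $\max[\cdot,0]$ operator satisfies $Q_n(t) - \mu_n(t) + A_n(t) \geq 0$, so the projection is inactive and the update collapses to the exact identity $Q_n(t+1) = Q_n(t) + A_n(t) - \mu_n(t)$. This is the one spot where the problem structure (rather than generic queueing) enters, and it is what lets the drift be computed as an equality rather than an inequality.

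Next I would write $(Q_n(t+1) - \theta_n)^2 = \bigl((Q_n(t) - \theta_n) + (A_n(t) - \mu_n(t))\bigr)^2$ and expand the square. Subtracting $(Q_n(t) - \theta_n)^2$ leaves the cross term plus the squared term; summing over $n \in \{1, \ldots, N\}$, dividing by $2$, and using $(A_n - \mu_n)^2 = (\mu_n - A_n)^2$ yields
\[
L(\bv{Q}(t+1)) - L(\bv{Q}(t)) = -\sum_{n=1}^N (Q_n(t) - \theta_n)(\mu_n(t) - A_n(t)) + \frac{1}{2}\sum_{n=1}^N (\mu_n(t) - A_n(t))^2 .
\]
Taking the conditional expectation $\expect{\cdot \mid \bv{Q}(t)}$ of both sides and invoking the definition (\ref{eq:one-slot-drift}) gives $\Delta(\bv{Q}(t))$ as an exact equality of this form. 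Defining $B$ to be any constant that dominates the expected squared term $\frac{1}{2}\sum_n \expect{(\mu_n(t) - A_n(t))^2 \mid \bv{Q}(t)}$ then produces the claimed inequality immediately.

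For the explicit bound (\ref{eq:B-ineq}) I would invoke constraints (\ref{eq:mu1}) and (\ref{eq:A1}), which confine both $\mu_n(t)$ and $A_n(t)$ to $\{0, 1, \ldots, \mu_n^{max}\}$; hence their difference lies in $[-\mu_n^{max}, \mu_n^{max}]$ and so $(\mu_n(t) - A_n(t))^2 \leq (\mu_n^{max})^2$ surely, and summing gives $B \leq \frac{1}{2}\sum_n (\mu_n^{max})^2$. I do not expect a genuine obstacle here, since the entire argument is a single quadratic expansion; the only point requiring care is justifying the removal of the $\max[\cdot,0]$ operator. Had the max been kept in full generality (as when comparing against a queue-independent policy that ignores (\ref{eq:mu3})), one would instead appeal to the elementary fact that $(\max[a,0] - \theta_n)^2 \leq (a - \theta_n)^2$ for every real $a$ whenever $\theta_n \geq 0$ — but since (\ref{eq:mu3}) holds for the actual decisions, the clean equality above suffices and no such inequality is needed for this lemma.
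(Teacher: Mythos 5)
Your proof is correct and follows essentially the same route as the paper's Appendix A: a one-slot quadratic expansion of $(Q_n(t+1)-\theta_n)^2$, followed by conditional expectation and the crude bound $(\mu_n(t)-A_n(t))^2 \leq (\mu_n^{max})^2$. The only cosmetic difference is that you discharge the $\max[\cdot,0]$ via constraint (\ref{eq:mu3}) to get an exact equality, whereas the paper retains the max and uses the inequality $(\max[a,0]-\theta_n)^2 \leq (a-\theta_n)^2$ for $\theta_n \geq 0$ (the very fallback you mention), which it needs later when comparing against queue-independent policies.
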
  
 \begin{proof}
 See Appendix A.
 \end{proof} 
 
 Using Lemma \ref{lem:lyap-drift} with the definition of $\phi(t)$ in (\ref{eq:phi}), a bound on the 
 drift-minus-reward expression is given as follows: 
 \begin{eqnarray}
 &\hspace{-1.3in}\Delta(\bv{Q}(t)) - V\expect{\phi(t)\left|\right.\bv{Q}(t)} \leq
 B  \nonumber \\
& - \sum_{n=1}^N(Q_n(t) - \theta_n)\expect{\mu_n(t) - A_n(t)\left|\right.\bv{Q}(t)} \nonumber \\
& -V\sum_{n=1}^N\expect{\mu_n(t) p_n(t) - s_n(\mu_n(t))\left|\right.\bv{Q}(t)} \nonumber \\
& +V\sum_{n=1}^N\expect{A_n(t)p_n(t) + b_n(A_n(t))\left|\right.\bv{Q}(t)}  \label{eq:drift-minus-penalty}
 \end{eqnarray} 
 We desire an algorithm that, every slot, observes the $\bv{Q}(t)$ values and the current
 prices, and makes a greedy trading action subject to the constraints (\ref{eq:mu1})-(\ref{eq:A2})
   that minimizes the right hand side of (\ref{eq:drift-minus-penalty}).   
 
 \subsection{The Dynamic Trading Algorithm} \label{section:alg} 
 
 Every slot $t$, observe $\bv{Q}(t)$ and $\bv{p}(t)$ and perform the following actions. 
 
 \begin{enumerate} 
 \item \emph{Selling:} For each $n \in \{1, \ldots, N\}$, choose $\mu_n(t)$ to solve: 
 \begin{eqnarray*}
 \mbox{Minimize:} & [\theta_n - Q_n(t)  -  Vp_n(t)]\mu_n(t) + Vs_n(\mu_n(t)) \\
 \mbox{Subject to:} & \mbox{Constraints (\ref{eq:mu1})-(\ref{eq:mu3})}
 \end{eqnarray*}
  
 \item \emph{Buying:} Choose $\bv{A}(t) = (A_1(t), \ldots, A_n(t))$ to solve: 
 \begin{eqnarray*}
 \mbox{Minimize:} & \sum_{n=1}^N[Q_n(t)-\theta_n + Vp_n(t)]A_n(t)  \\
 & + \sum_{n=1}^N Vb_n(A_n(t)) \\
 \mbox{Subject to:} & \mbox{Constraints (\ref{eq:A1})-(\ref{eq:A2})}
 \end{eqnarray*}  
 \end{enumerate} 
 
 The buying algorithm uses the integer constraints  (\ref{eq:A1})-(\ref{eq:A2}), and is
 related to the well known \emph{bounded
 knapsack problem} (it is exactly the bounded knapsack problem if the $b_n(\cdot)$ functions are linear).   
 Implementation of this integer constrained problem can be complex when the number of stocks $N$ is
 large.  
 However, if we use $x \defequiv \sum_{n=1}^N[\mu_n^{max}p_n^{max} + b_n(\mu_n^{max})]$, then
 constraint (\ref{eq:A2}) is effectively removed.  In this case, the stocks are decoupled and the buying 
 algorithm reduces to making separate decisions for each stock $n$.  Alternatively, 
 the constraint (\ref{eq:A2}) can be replaced by the constraint (\ref{eq:new-constraint-x}).  In
 this case, it is easy to see that if buying costs are linear,  so that $b_n(A) = b_n A$ for all $n$ (for some positive
 constants $b_n$), then
 the buying algorithm 
 reduces to successively
 buying as much stock as possible from the queues with the smallest (and negative) $[Q_n(t)-\theta_n + V(p_n(t)+b_n)]$ values. 
An alternative relaxation of the constraint (\ref{eq:A2}) is discussed in Section \ref{section:relax}.  
 
 \begin{lem} \label{lem:minimizes-drift} For a given $\bv{Q}(t)$ on slot $t$, 
  the above dynamic trading algorithm satisfies: 
  \begin{eqnarray}
  B -V\phi(t) - \sum_{n=1}^N(Q_n(t)-\theta_n)(\mu_n(t) - A_n(t))  \leq \nonumber \\
  B -V\phi^*(t) - \sum_{n=1}^N(Q_n(t) - \theta_n)(\mu_n^*(t) - A_n^*(t))  \label{eq:min1} 
  \end{eqnarray}
  where $\bv{A}(t)$, $\bv{\mu}(t)$ are the actual 
  decisions made by the algorithm, which define $\phi(t)$ by (\ref{eq:phi}), and $\bv{A}^*(t)$, $\bv{\mu}^*(t)$ are any
  alternative (possibly randomized) 
  decisions that can be made on slot $t$ that satisfy (\ref{eq:mu1})-(\ref{eq:A2}), which define $\phi^*(t)$ 
  by (\ref{eq:phi}). Furthermore, we have: 
   \begin{eqnarray}
 &\hspace{-1.3in}\Delta(\bv{Q}(t)) - V\expect{\phi(t)\left|\right.\bv{Q}(t)} \leq
 B  \nonumber \\
& - \sum_{n=1}^N(Q_n(t) - \theta_n)\expect{\mu_n^*(t) - A_n^*(t)\left|\right.\bv{Q}(t)} \nonumber \\
& -V\sum_{n=1}^N\expect{\mu_n^*(t) p_n(t) - s_n(\mu_n^*(t))\left|\right.\bv{Q}(t)} \nonumber \\
& +V\sum_{n=1}^N\expect{A_n^*(t)p_n(t) + b_n(A_n^*(t))\left|\right.\bv{Q}(t)}  \label{eq:drift-minus-penalty2}
 \end{eqnarray} 
 where the expectation on the right hand side of (\ref{eq:drift-minus-penalty2}) is with respect to the 
 random price vector $\bv{p}(t)$ and the possibly random actions $\bv{A}^*(t)$, $\bv{\mu}^*(t)$ in response
 to this price vector. 
 \end{lem}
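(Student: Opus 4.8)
The plan is to recognize that the Dynamic Trading Algorithm is engineered so that its Selling and Buying steps \emph{jointly} minimize the left-hand side of (\ref{eq:min1}) over all decisions obeying (\ref{eq:mu1})--(\ref{eq:A2}); inequality (\ref{eq:drift-minus-penalty2}) then follows by taking a conditional expectation and invoking the drift bound (\ref{eq:drift-minus-penalty}). First I would substitute the definition (\ref{eq:phi}) of $\phi(t)$ into the quantity $B - V\phi(t) - \sum_{n=1}^N(Q_n(t)-\theta_n)(\mu_n(t) - A_n(t))$ and regroup. Collecting every term that multiplies a selling variable and every term that multiplies a buying variable shows that this quantity equals $B$ plus a ``selling part'' $\sum_{n=1}^N\left([\theta_n - Q_n(t) - Vp_n(t)]\mu_n(t) + Vs_n(\mu_n(t))\right)$ plus a ``buying part'' $\sum_{n=1}^N\left([Q_n(t)-\theta_n + Vp_n(t)]A_n(t) + Vb_n(A_n(t))\right)$, which are precisely the two objectives minimized by the Selling and Buying steps. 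The decisive structural fact is that the selling constraints (\ref{eq:mu1})--(\ref{eq:mu3}) involve only the $\mu_n(t)$ variables while the buying constraints (\ref{eq:A1})--(\ref{eq:A2}) involve only the $A_n(t)$ variables; hence the feasible region is a Cartesian product and the joint minimization splits into the two independent subproblems the algorithm solves exactly. (Within the buying subproblem the $A_n(t)$ remain coupled through (\ref{eq:A2}), but this coupling is absorbed by the knapsack-type buying step and does not affect the separation of selling from buying.) Since the algorithm attains the minimum of each part, the value it produces is no greater than that produced by any feasible $\bv{A}^*(t),\bv{\mu}^*(t)$, which is exactly (\ref{eq:min1}).

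To pass from (\ref{eq:min1}) to (\ref{eq:drift-minus-penalty2}), I would note that (\ref{eq:min1}) holds for every realization of $\bv{p}(t)$ and, when $\bv{A}^*(t),\bv{\mu}^*(t)$ are randomized, for every realization of those actions as well. Taking the conditional expectation given $\bv{Q}(t)$ preserves the inequality, giving $\expect{B - V\phi(t) - \sum_{n=1}^N(Q_n(t)-\theta_n)(\mu_n(t)-A_n(t))\left|\right.\bv{Q}(t)} \leq \expect{B - V\phi^*(t) - \sum_{n=1}^N(Q_n(t)-\theta_n)(\mu_n^*(t)-A_n^*(t))\left|\right.\bv{Q}(t)}$. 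By construction the left side is the upper bound on $\Delta(\bv{Q}(t)) - V\expect{\phi(t)\left|\right.\bv{Q}(t)}$ supplied by (\ref{eq:drift-minus-penalty}), and expanding $\phi^*(t)$ via (\ref{eq:phi}) turns the right side into the four-term expression of (\ref{eq:drift-minus-penalty2}); combining the two displays yields the claim.

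The step I expect to demand the most care is the conditional-expectation bookkeeping rather than any hard estimate. One must keep $\bv{Q}(t)$ fixed throughout so that each $Q_n(t)-\theta_n$ behaves as a known constant, ensure the algorithm's actual actions on the left are those induced by the \emph{same} price vector $\bv{p}(t)$ that drives the comparison policy, and correctly average over any randomization permitted in $\bv{A}^*(t),\bv{\mu}^*(t)$. A minor but worthwhile check is that both policies respect (\ref{eq:mu3}), so that the two decision sets compared in (\ref{eq:min1}) coincide and the bound is genuinely a comparison between feasible alternatives.
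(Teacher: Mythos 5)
Your proposal is correct and follows essentially the same route as the paper's proof: the paper likewise observes that the Dynamic Trading Algorithm minimizes the left-hand side of (\ref{eq:min1}) over all decisions satisfying (\ref{eq:mu1})--(\ref{eq:A2}) (your explicit regrouping into the selling and buying objectives is exactly why this holds), and then obtains (\ref{eq:drift-minus-penalty2}) by taking conditional expectations and identifying the left side with the bound (\ref{eq:drift-minus-penalty}). The only difference is that you spell out the separable decomposition that the paper leaves implicit.
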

  \begin{proof} 
Given $\bv{Q}(t)$ on slot $t$, 
the dynamic algorithm makes buying and selling decisions to minimize the left hand side of 
(\ref{eq:min1}) over all alternative decisions that satisfy (\ref{eq:mu1})-(\ref{eq:A2}).  Therefore, 
the inequality (\ref{eq:min1}) holds for all realizations of the random quantities, and 
hence also holds when taking conditional expectations of both sides. The
conditional expectation of the left hand side of (\ref{eq:min1}) is equivalent to the right hand side 
of the drift-minus-reward expression (\ref{eq:drift-minus-penalty}), which proves (\ref{eq:drift-minus-penalty2}). 
\end{proof} 

The main idea behind our analysis is that the Dynamic Trading Algorithm is simple to implement and 
does not require knowledge of the future or of the statistics of the price process $\bv{p}(t)$.  
However, it can be \emph{compared} to alternative policies
$\bv{A}^*(t)$ and $\bv{\mu}^*(t)$ (such as in Lemma \ref{lem:minimizes-drift}, and in other lemmas
in Sections \ref{section:non-iid} and \ref{section:arbitrary-prices} that consider more complex price  processes),
and these policies 
possibly have knowledge both of the price statistics and of the future.  
 
 \subsection{Bounding the Stock Queues} 
 
 The next lemma shows that the above algorithm does not sell any shares of stock $n$ if $Q_n(t)$ is sufficiently
 small.
   
 \begin{lem} \label{lem:lower-bound} Under the above Dynamic Trading Algorithm and for arbitrary price 
 processes $\bv{p}(t)$ that satisfy (\ref{eq:price-bound}), 
 if $Q_n(t) < \theta_n - Vp_n^{max}$ for some particular queue $n$ and slot $t$, then $\mu_n(t) = 0$.   Therefore,  
 if $Q_n(0) \geq \theta_n - Vp_n^{max} - \mu_n^{max}$, then: 
 \[ Q_n(t) \geq \theta_n - Vp_n^{max} - \mu_n^{max} \: \: \mbox{ for all $t$} \]
 \end{lem}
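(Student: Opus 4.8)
The plan is to prove the two assertions in turn: first that the selling rule forces $\mu_n(t)=0$ whenever the queue is below the stated threshold, and then to use this ``no-sell barrier'' in a short induction to propagate the lower bound forward in time.

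For the first assertion, I would examine the selling subproblem directly. The algorithm chooses $\mu_n(t)$ to minimize $[\theta_n - Q_n(t) - Vp_n(t)]\mu_n(t) + Vs_n(\mu_n(t))$ subject to (\ref{eq:mu1})--(\ref{eq:mu3}). The key is the sign of the linear coefficient $\theta_n - Q_n(t) - Vp_n(t)$. If $Q_n(t) < \theta_n - Vp_n^{max}$, then $\theta_n - Q_n(t) > Vp_n^{max} \geq Vp_n(t)$ by the price bound (\ref{eq:price-bound}), so the coefficient is strictly positive. Since $s_n(\cdot)$ is non-negative with $s_n(0)=0$, both terms of the objective are non-negative and vanish at $\mu_n(t)=0$, while for any feasible $\mu_n(t)\geq 1$ the linear term alone is strictly positive. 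Hence $\mu_n(t)=0$ is the unique minimizer, establishing the first claim.

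For the second assertion, I would argue by induction on $t$, with the base case $t=0$ holding by hypothesis. Assuming $Q_n(t) \geq \theta_n - Vp_n^{max} - \mu_n^{max}$, I would split into two cases. If $Q_n(t) \geq \theta_n - Vp_n^{max}$, then since $A_n(t)\geq 0$, $\mu_n(t)\leq\mu_n^{max}$, and $\max[\cdot,0]$ only increases its argument, the dynamics (\ref{eq:q-dynamics}) give $Q_n(t+1) \geq Q_n(t) - \mu_n^{max} \geq \theta_n - Vp_n^{max} - \mu_n^{max}$. If instead $Q_n(t) < \theta_n - Vp_n^{max}$, the first assertion yields $\mu_n(t)=0$, so $Q_n(t+1) = Q_n(t) + A_n(t) \geq Q_n(t) \geq \theta_n - Vp_n^{max} - \mu_n^{max}$ by the inductive hypothesis. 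In either case the bound is preserved, completing the induction.

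The argument is essentially self-contained and I expect no serious obstacle; the only point requiring care is the first step, namely tracking the sign of $\theta_n - Q_n(t) - Vp_n(t)$ and invoking the \emph{uniform} price bound $p_n(t)\leq p_n^{max}$ to guarantee positivity for every realization of $\bv{p}(t)$, not merely in expectation. Once the no-sell barrier is in place, the second part is just the standard observation that a queue whose service is switched off below a floor can drop by at most one batch of size $\mu_n^{max}$ per slot.
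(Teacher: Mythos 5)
Your proposal is correct and follows essentially the same route as the paper's own proof: the sign analysis of the coefficient $\theta_n - Q_n(t) - Vp_n(t)$ via the uniform bound $p_n(t) \leq p_n^{max}$, followed by the two-case induction that uses the no-sell property below $\theta_n - Vp_n^{max}$ and the bounded per-slot decrease $\mu_n^{max}$ above it. No gaps.
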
 
 \begin{proof} 
 Suppose that $Q_n(t) < \theta_n - Vp_n^{max}$ for some particular queue $n$ and slot $t$.
 Then for any $\mu\geq 0$ we have: 
\begin{eqnarray*}
&&\hspace{-.9in} [\theta_n - Q_n(t) - Vp_n(t)]\mu + V s_n(\mu) \\
\hspace{+.4in}&\geq&   [\theta_n - Q_n(t) - Vp_n^{max}]\mu  + V s_n(\mu) \\
\hspace{+.4in}&\geq& [\theta_n - Q_n(t) - Vp_n^{max}]\mu \\
\hspace{+.4in}&\geq& 0
 \end{eqnarray*}
 where the final inequality holds with equality if and only if $\mu=0$. 
 Therefore, the Dynamic Trading Algorithm must choose $\mu_n(t) = 0$. 
 
 Now suppose that $Q_n(t) \geq \theta_n - Vp_n^{max} - \mu_n^{max}$ for some time $t$. We show it also holds for $t+1$.
 If $Q_n(t) \geq \theta_n - Vp_n^{max}$, then it can decrease by at most $\mu_n^{max}$ on a single slot, so that 
 $Q_n(t+1) \geq \theta_n - Vp_n^{max} - \mu_n^{max}$.  
 Conversely, if $\theta_n  - Vp_n^{max} > Q_n(t) \geq \theta_n - Vp_n^{max} - \mu_n^{max}$, 
 then we know $\mu_n(t)=0$ and so the queue cannot decrease on the next slot and 
 we again have $Q_n(t+1) \geq \theta_n - Vp_n^{max} - \mu_n^{max}$.  It follows that this inequality is always upheld if it is satisfied
 at $t=0$. 
 \end{proof}

We note that the above lemma is a \emph{sample path} statement that holds for arbitrary (possibly non-ergodic) 
price processes. 
 The next lemma also deals with sample paths, and 
 shows that all queues have a finite maximum size $Q_n^{max}$.
 
 \begin{lem} \label{lem:upper-bound} 
 Under the above Dynamic Trading Algorithm and for arbitrary price processes
 $\bv{p}(t)$ that satisfy (\ref{eq:price-bound}), if $Q_n(t) > \theta_n$ for some particular queue $n$ and 
 slot $t$, 
 then $A_n(t) = 0$ and so 
 the queue cannot increase on the next slot.  It follows that if $Q_n(0) \leq \theta_n + \mu_n^{max}$, then: 
 \[ Q_n(t) \leq \theta_n + \mu_n^{max} \: \: \mbox{ for all $t$} \]
 \end{lem}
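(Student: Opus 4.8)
The plan is to mirror the structure of the proof of Lemma \ref{lem:lower-bound}, exploiting the sign of the per-stock coefficient in the buying subproblem and then promoting the single-slot statement to a uniform bound by induction. First I would establish the single-slot claim: if $Q_n(t) > \theta_n$ then $A_n(t) = 0$. Recall that the buying algorithm chooses $\bv{A}(t)$ to minimize $\sum_{n=1}^N[Q_n(t)-\theta_n+Vp_n(t)]A_n(t) + \sum_{n=1}^N Vb_n(A_n(t))$ subject to (\ref{eq:A1})-(\ref{eq:A2}). When $Q_n(t) > \theta_n$, the coefficient $Q_n(t)-\theta_n+Vp_n(t)$ is strictly positive, since $V>0$ and $p_n(t)\geq 0$ by (\ref{eq:price-bound}). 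Because $b_n(\cdot)$ is non-negative with $b_n(0)=0$, the entire contribution of stock $n$ to the objective, namely $[Q_n(t)-\theta_n+Vp_n(t)]A_n(t) + Vb_n(A_n(t))$, is non-negative and vanishes only at $A_n(t)=0$.

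The one slightly subtle step is optimality in the presence of the \emph{joint} budget constraint (\ref{eq:A2}), which couples the stocks. Here I would use an exchange argument: take any candidate solution with $A_n(t)>0$ for this particular $n$ and replace it by the same solution with that component set to $0$. Feasibility is preserved because $0$ lies in the allowed integer set (\ref{eq:A1}), and because zeroing out a non-negative purchase only decreases $\sum_{n=1}^N A_n(t)p_n(t)$, the left-hand side of (\ref{eq:A2}), so the budget constraint cannot become violated. Meanwhile the objective strictly decreases by the (strictly positive) stock-$n$ contribution above. Hence any minimizer must set $A_n(t)=0$, proving the single-slot claim. With $A_n(t)=0$, the dynamics (\ref{eq:q-dynamics}) give $Q_n(t+1)=\max[Q_n(t)-\mu_n(t),0]\leq Q_n(t)$, so the queue cannot increase.

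Finally I would obtain the uniform bound by induction on $t$, exactly as in Lemma \ref{lem:lower-bound}. Assuming $Q_n(t)\leq \theta_n+\mu_n^{max}$, I split into two cases. If $Q_n(t)>\theta_n$, then $A_n(t)=0$ and $Q_n(t+1)\leq Q_n(t)\leq \theta_n+\mu_n^{max}$. If instead $Q_n(t)\leq \theta_n$, then since $\mu_n(t)\geq 0$ the queue can grow by at most $A_n(t)\leq \mu_n^{max}$ by (\ref{eq:A1}), so $Q_n(t+1)\leq Q_n(t)+\mu_n^{max}\leq \theta_n+\mu_n^{max}$. Thus the bound propagates from $t$ to $t+1$, and since it holds at $t=0$ by hypothesis, it holds for all $t$.

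I do not expect a genuine obstacle; the argument is an essentially self-contained sign calculation. The only place requiring care is confirming that the exchange argument respects the shared budget constraint (\ref{eq:A2}): unlike the per-stock selling subproblem in Lemma \ref{lem:lower-bound}, the buying subproblem couples the stocks, so I want to be explicit that deleting a purchase \emph{relaxes} rather than tightens that constraint, which is what keeps the modified solution feasible.
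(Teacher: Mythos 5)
Your proposal is correct and follows essentially the same argument as the paper: a sign/exchange argument showing that zeroing out $A_n(t)$ is feasible (it only relaxes the budget constraint (\ref{eq:A2})) and strictly decreases the buying objective when $Q_n(t)>\theta_n$, followed by the same two-case induction to propagate the bound $Q_n(t)\leq\theta_n+\mu_n^{max}$. Your explicit note that deleting a purchase relaxes rather than tightens the coupled budget constraint is exactly the point the paper's proof relies on when it asserts the modified vector still satisfies (\ref{eq:A1})--(\ref{eq:A2}).
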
 
 \begin{proof} 
 Suppose that $Q_n(t) > \theta_n$ for a particular queue $n$ and  slot $t$. 
 Let  $\bv{A}(t) = (A_1(t), \ldots, A_N(t))$ be a vector of buying decisions that solve the 
 optimization associated with the Buying algorithm on slot $t$, so that they minimize the
 expression: 
 \begin{equation} \label{eq:expression} 
 \sum_{m=1}^N [Q_m(t) - \theta_m + Vp_m(t)]A_m(t) + \sum_{m=1}^N Vb_m(A_m(t)) 
 \end{equation} 
 subject to (\ref{eq:A1})-(\ref{eq:A2}). 
 Suppose that $A_n(t)>0$ (we shall reach a contradiction).  Because the term
 $[Q_n(t) - \theta_n + Vp_n(t)]$ is \emph{strictly positive}, and because the $b_n(A)$ function is non-decreasing, 
 we can strictly reduce the value of the expression (\ref{eq:expression})  by changing $A_n(t)$ to $0$. 
 This change still satisfies the constraints (\ref{eq:A1})-(\ref{eq:A2}) and produces a strictly smaller
 sum in (\ref{eq:expression}), 
 contradicting the assumption that $\bv{A}(t)$ is a  minimizer.
 Thus, if $Q_n(t) > \theta_n$, then $A_n(t) = 0$. 
 
 Because the queue value can increase by at most $\mu_n^{max}$ on any slot, and cannot increase if it already
 exceeds $\theta_n$,  it follows that 
 $Q_n(t) \leq \theta_n + \mu_n^{max}$ for all $t$, provided that this inequality holds at $t=0$. 
 \end{proof}

 \subsection{Analyzing Time Average Profit}  
 \begin{thm} \label{thm:1}   Fix any value $V>0$, and  define $\theta_n$ as follows: 
 \begin{equation} 
 \theta_n \defequiv Vp_n^{max} + 2\mu_n^{max} \label{eq:theta-n} 
 \end{equation} 
 Suppose that initial stock queues satisfy: 
 \begin{equation} \label{eq:initial-condition} 
 \mu_n^{max} \leq Q_n(0) \leq Vp_n^{max} + 3\mu_n^{max} 
 \end{equation} 
 If the Dynamic Trading Algorithm is implemented over $t \in \{0, 1, 2, \ldots\}$, then: 
 
 (a) Stock queues $Q_n(t)$ (for $n \in \{1, \ldots, N\}$) are deterministically bounded for all slots $t$
 as follows: 
 \begin{equation} \label{eq:q-bound} 
  \mu_n^{max} \leq Q_n(t) \leq Vp_n^{max} + 3\mu_n^{max} \: \: \mbox{ for all $n$ and all $t$} 
  \end{equation} 
 
 (b)  If $\bv{p}(t)$ is i.i.d. over slots with general distribution $Pr[\bv{p}(t) = \bv{p}] = \pi(\bv{p})$ for all
 $\bv{p} \in \script{P}$, then for all $t \in \{1, 2, \ldots\}$ we have: 
 \begin{equation} \label{eq:performance-bound} 
 \overline{\phi}(t) \geq \phi^{opt} - \frac{B}{V} - \frac{\expect{L(\bv{Q}(0))}}{Vt} 
 \end{equation} 
 where the constant $B$ is defined by (\ref{eq:B}) (and satisfies the inequality (\ref{eq:B-ineq})), 
 $\phi^{opt}$ is the optimal time average profit, and $\overline{\phi}(t)$ is the time average expected
 profit over $t$ slots: 
 \begin{eqnarray}
 &\overline{\phi}(t) \defequiv \frac{1}{t}\sum_{\tau=0}^{t-1} \expect{\phi(\tau)}& \label{eq:phi-t} 
 \end{eqnarray}
 Therefore: 
 \begin{equation} \label{eq:limit-performance} 
 \liminf_{t\rightarrow\infty} \overline{\phi}(t) \geq \phi^{opt} - B/V 
 \end{equation} 
 \end{thm}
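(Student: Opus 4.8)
The plan is to prove the three claims in order, obtaining part (a) directly from the sample-path queue bounds of Lemmas~\ref{lem:lower-bound} and \ref{lem:upper-bound}, and then using part (a) as the key ingredient that legitimizes the drift comparison needed for part (b). The telescoping drift-plus-penalty computation is the familiar Lyapunov-optimization machinery; the real content is the logical ordering between the two parts.

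For part (a), I would substitute the chosen target $\theta_n \defequiv Vp_n^{max} + 2\mu_n^{max}$ into the two one-sided bounds already in hand. Lemma~\ref{lem:lower-bound} guarantees $Q_n(t) \geq \theta_n - Vp_n^{max} - \mu_n^{max}$ whenever this holds at $t=0$; with the chosen $\theta_n$ the right side equals exactly $\mu_n^{max}$, and the lower half of the hypothesis (\ref{eq:initial-condition}) supplies the base case. Symmetrically, Lemma~\ref{lem:upper-bound} gives $Q_n(t) \leq \theta_n + \mu_n^{max} = Vp_n^{max} + 3\mu_n^{max}$, whose base case is the upper half of (\ref{eq:initial-condition}). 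Together these yield (\ref{eq:q-bound}). This step is pure arithmetic verification that the choice of $\theta_n$ makes the two bounds close up into the stated interval.

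The pivotal observation is that part (a) renders constraint (\ref{eq:mu3}) \emph{non-binding}: since (\ref{eq:mu1}) forces any $p$-only decision to satisfy $\mu_n^*(t) \leq \mu_n^{max}$, and part (a) gives $Q_n(t) \geq \mu_n^{max}$, every $p$-only policy automatically obeys $\mu_n^*(t) \leq Q_n(t)$. Hence the optimal zero-drift $p$-only policy of (\ref{eq:zero-drift-iid})--(\ref{eq:opt-profit-iid}) is a \emph{feasible} comparison policy satisfying the full set (\ref{eq:mu1})--(\ref{eq:A2}), so it may legitimately be inserted into Lemma~\ref{lem:minimizes-drift}. For part (b) I would then take the comparison decisions $\bv{A}^*(t),\bv{\mu}^*(t)$ in (\ref{eq:drift-minus-penalty2}) to be exactly this policy. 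Under the i.i.d. model, (\ref{eq:zero-drift-iid}) kills the middle term $\sum_n (Q_n(t)-\theta_n)\expect{\mu_n^*(t)-A_n^*(t)\left|\right.\bv{Q}(t)}$ because each conditional expectation vanishes, while (\ref{eq:opt-profit-iid}) collapses the two profit sums into $-V\phi^{opt}$. This leaves the clean inequality $\Delta(\bv{Q}(t)) - V\expect{\phi(t)\left|\right.\bv{Q}(t)} \leq B - V\phi^{opt}$. Taking total expectations (iterated expectation) turns the drift into $\expect{L(\bv{Q}(t+1))} - \expect{L(\bv{Q}(t))} - V\expect{\phi(t)}$; summing over $\tau \in \{0,\ldots,t-1\}$ telescopes the Lyapunov terms, dropping the nonnegative $\expect{L(\bv{Q}(t))}$, and dividing by $Vt$ produces (\ref{eq:performance-bound}). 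The $\liminf$ statement (\ref{eq:limit-performance}) then follows because part (a) makes $\expect{L(\bv{Q}(0))}$ finite, so the $1/(Vt)$ term vanishes.

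The main obstacle is not the telescoping algebra, which is routine, but precisely this interlocking of the two parts. The comparison in Lemma~\ref{lem:minimizes-drift} is only valid against alternatives that respect (\ref{eq:mu3}), whereas the optimal $p$-only policy achieving $\phi^{opt}$ is defined with no reference to the queue state and generically would violate it. Part (a) must therefore be proved \emph{first}, and the particular value $\theta_n = Vp_n^{max} + 2\mu_n^{max}$ is engineered so that the guaranteed floor $Q_n(t) \geq \mu_n^{max}$ exactly meets the per-slot sales ceiling $\mu_n^{max}$, making (\ref{eq:mu3}) slack and closing the gap that would otherwise block the use of the optimal comparison policy.
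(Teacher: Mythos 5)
Your proposal is correct and follows essentially the same route as the paper: part (a) by combining Lemmas~\ref{lem:lower-bound} and \ref{lem:upper-bound} with the chosen $\theta_n$, and part (b) by inserting the zero-drift $p$-only policy of (\ref{eq:zero-drift-iid})--(\ref{eq:opt-profit-iid}) into Lemma~\ref{lem:minimizes-drift} and telescoping, with the observation that $Q_n(t)\geq\mu_n^{max}$ makes constraint (\ref{eq:mu3}) non-binding being exactly the step the paper also singles out. No gaps.
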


Theorem \ref{thm:1} shows that the time average expected profit is within $B/V$ of the 
optimal value $\phi^{opt}$.  Because the $B$ constant is independent of $V$, we can choose
 $V$ to make $B/V$ arbitrarily small.  This comes with a tradeoff in the maximum  size required
 for each stock queue that is linear in $V$.  
 Specifically, the maximum stock level $Q_n^{max}$ required for stock $n$ is given as
 follows: 
\[ Q_n^{max} \defequiv Vp_n^{max} + 3\mu_n^{max} \]

Now suppose that we start with initial condition $Q_n(0) = \mu_n^{max}$ for all $n$ and all $t$. 
Then for $t \in \{1, 2, \ldots \}$ the error term $L(\bv{Q}(0))/(Vt)$ is given by: 
\begin{equation} \label{eq:timescales} 
 \frac{L(\bv{Q}(0))}{Vt} = \frac{\sum_{n=1}^N (Vp_n^{max }+ \mu_n^{max})^2}{2Vt} = O(V)/t 
 \end{equation} 
This shows that if $V$ is chosen to be large, then the amount of time $t$ required to make this error
term negligible must also be large.  One can minimize this error term with an initial condition 
$Q_n(0)$ that is close to $\theta_n$ for all $n$.  However, this is an artificial savings, as it does not include
the \emph{startup cost} associated with purchasing that many initial units of stock.    Therefore, the 
timescales are more accurately described by the transient given in (\ref{eq:timescales}).  

One may wonder how the Dynamic Trading Algorithm is achieving near optimal profit without 
knowing the distribution of the price vector $\bv{p}(t)$, and without estimating this distribution. 
The answer
is that it uses the queue values themselves to guide decisions.  These queue values $Q_n(t)$ 
only deviate significantly from the target $\theta_n$ when inefficient decisions are made.  The values then act as a
 ``sufficient statistic'' on which to base future decisions.   The same sufficient statistic holds for the non-i.i.d. case, as shown 
 in Section \ref{section:non-iid}, so that we do not need to estimate price patterns or time-correlations, provided that we allow
 for a sufficiently large control parameter $V$ and corresponding large timescales for convergence. 

Finally, one may also wonder if the limiting time average \emph{expected profit} given in  
(\ref{eq:limit-performance}) also holds (with probability 1) for the limiting time average profit (without
the expectation).  When $\bv{p}(t)$ evolves according to 
a finite state irreducible Markov chain (as is the case in this i.i.d. scenario), 
then the Dynamic Trading Algorithm in turn makes $\bv{Q}(t)$ evolve according to a finite state
Markov chain, and it can be shown that the limiting time average expected profit is the same (with probability 1) 
as the limiting time average profit.

 \subsection{Proof of Theorem \ref{thm:1}} 
  \begin{proof} (Theorem \ref{thm:1} part (a)) 
 By Lemma \ref{lem:lower-bound} we know that $Q_n(t) \geq \theta_n - Vp_n^{max} - \mu_n^{max}$ for all
 $t$ (provided that this holds at $t=0$).  However, $\theta_n - Vp_n^{max} - \mu_n^{max} = \mu_n^{max}$. 
 Thus, $Q_n(t) \geq \mu_n^{max}$ for all $t$, provided that this holds for $t=0$.   Similarly, 
by Lemma \ref{lem:upper-bound} we know that $Q_n(t) \leq \theta_n + \mu_n^{max}$ for all $t$ (provided
that this holds for $t=0$), and $\theta_n + \mu_n^{max} = Q_{n}^{max}$. 
 \end{proof} 
 
\begin{proof} (Theorem \ref{thm:1} part (b)) 
Fix a slot $t \in \{0, 1, 2, \ldots\}$. 
To prove part (b), we plug an alternative set of control choices 
$\bv{A}^*(t)$ and $\bv{\mu}^*(t)$ into the drift-minus-reward bound
(\ref{eq:drift-minus-penalty2}) of Lemma \ref{lem:minimizes-drift}. 
Because $\bv{p}(t)$ is i.i.d., we can choose $\bv{A}^*(t)$ and $\bv{\mu}^*(t)$ as the $p$-only policy 
 that satisfies (\ref{eq:zero-drift-iid}), (\ref{eq:opt-profit-iid}). 
 Note that we must first ensure this 
 $p$-only policy satisfies the constraint (\ref{eq:mu3}) 
 needed to apply the bound (\ref{eq:drift-minus-penalty2}).  However, we know from
 part (a) of this theorem that $Q_n(t) \geq \mu_n^{max}$ for all $n$, and so the constraint
 (\ref{eq:mu3}) is \emph{trivially} satisfied.  Therefore, we can plug
 this policy $\bv{A}^*(t)$ and $\bv{\mu}^*(t)$ into (\ref{eq:drift-minus-penalty2}) and use
 equalities (\ref{eq:zero-drift-iid}) and (\ref{eq:opt-profit-iid}) to yield: 
   \begin{eqnarray*}
 \Delta(\bv{Q}(t)) - V\expect{\phi(t)\left|\right.\bv{Q}(t)} \leq
 B  - V\phi^{opt}  
 \end{eqnarray*} 
 Taking expectations of the above inequality over the distribution of $\bv{Q}(t)$ and using the law of iterated
 expectations yields: 
 \[ \expect{L(\bv{Q}(t+1)) - L(\bv{Q}(t))} - V\expect{\phi(t)} \leq B - V\phi^{opt} \]
 The above holds for all $t \in \{0, 1, 2, \ldots, \}$.  Summing the above over $\tau \in \{0, \ldots, t-1\}$ (for some
 positive integer $t$) yields: 
 \[ \expect{L(\bv{Q}(t)) - L(\bv{Q}(0))} - V\sum_{\tau=0}^{t-1} \expect{\phi(\tau)} \leq tB - tV\phi^{opt} \]
 Dividing by $tV$, rearranging terms, and using non-negativity of $L(\cdot)$ yields: 
 \[ \overline{\phi}(t) \geq \phi^{opt} - B/V - \expect{L(\bv{Q}(0))}/Vt \]
 where $\overline{\phi}(t)$ is defined in (\ref{eq:phi-t}). This proves the result. 
\end{proof}

\section{Non-I.I.D. Prices}  \label{section:non-iid} 

Here we consider a general class of non-i.i.d. price processes that have a mild \emph{decaying memory property}. 
We first note that the only place a change is needed is in the proof of Theorem \ref{thm:1} part (b).  Indeed, part (a) 
of Theorem \ref{thm:1} is a \emph{sample path statement} that is true for any $\bv{p}(t)$ process.  That is, regardless
of whether or not $\bv{p}(t)$ is i.i.d. over slots, and even if it does not have well defined time averages 
as in (\ref{eq:time-avg-p}), we still have: 
\[   \mu_n^{max} \leq Q_n(t) \leq Vp_n^{max} + 3\mu_n^{max} \: \: \mbox{for all $n$ and all $t$}  \]
provided that this inequality is upheld at time $0$, and that the $\theta_n$ values are 
defined as in (\ref{eq:theta-n}). 

\subsection{The Decaying Memory Property} 

First consider any price vector process $\bv{p}(t)$ that satisfies (\ref{eq:time-avg-p}), where $\pi(\bv{p})$ is the 
time average fraction of time that $\bv{p}(t) = \bv{p}$. 
Consider implementing the $p$-only policy that would achieve (\ref{eq:zero-drift-iid}) and (\ref{eq:opt-profit-iid})
on each slot $t$ if the process where i.i.d. with the same steady state distribution $\pi(\bv{p})$.  We call
this the \emph{optimal $p$-only policy}.  Let $\bv{A}^*(t)$
and $\bv{\mu}^*(t)$ represent the resulting decision variables under this policy.  
Because these decisions react only to the current
$\bv{p}(t)$, and because the limiting fraction of time of being in each price state is the same as the i.i.d. case, 
the identities (\ref{eq:zero-drift-iid}) and (\ref{eq:opt-profit-iid})  
are now true in the limit as $t \rightarrow \infty$ (rather than true on every slot $t$): 
\begin{eqnarray*}
0 &=& \lim_{t\rightarrow\infty} \frac{1}{t}\sum_{\tau=0}^{t-1} \expect{A_n^*(\tau) - \mu_n^*(\tau)} \: \: \mbox{ for all $n$} \\
\phi^{opt} &=& \lim_{t\rightarrow\infty} \frac{1}{t}\sum_{\tau=0}^{t-1}\expect{\phi^*(\tau)}
\end{eqnarray*}
where $\phi^*(\tau)$ is defined: 
\begin{eqnarray}
\phi^*(\tau) &\defequiv& \sum_{n=1}^N\expect{\mu_n^*(\tau) p_n(\tau) - s_n(\mu_n^*(\tau))} \nonumber \\
&& -  \sum_{n=1}^{N} \expect{A_n^*(\tau) p_n(\tau) + b_N(A_n^*(\tau))}  \label{eq:phi-star-tau} 
\end{eqnarray}

We now further assume that the $\bv{p}(t)$ process achieves time averages that are close to these
limits when summed over an interval of $T$ slots, regardless of the past history before the interval.  
Specifically,  let $H(t)$ denote the history of the system up to slot  $t$, defined: 
\[ H(t) \defequiv [\bv{Q}(t), \bv{Q}(t-1), \ldots, \bv{Q}(0); \bv{p}(t-1), \bv{p}(t-2), \ldots, \bv{p}(0)] \]
Assume there are arbitrarily small values
$\epsilon>0$ for which there exists a positive integer $T$ (that may depend on $\epsilon$) 
such that the optimal $p$-only policy yields the following:  For any 
slot $t_0 \in \{0, 1, 2, \ldots\}$ and any $H(t_0)$, 
we have for all $n \in \{1, \ldots, N\}$: 
\begin{eqnarray} 
\left|\frac{1}{T}\sum_{\tau=t_0}^{t_0+T-1}\expect{A_n^*(\tau) - \mu_n^*(\tau)\left|\right.H(t_0)} \right| \leq \epsilon \label{eq:drift-noniid} 
\end{eqnarray} 
and
\begin{eqnarray}
\left| \phi^{opt} - \frac{1}{T}\sum_{\tau=t_0}^{t_0 + T-1}\expect{\phi^*(\tau) \left|\right.H(t_0)} \right|\leq \epsilon \label{eq:near-profit} 
\end{eqnarray}

We say that the stochastic process $\bv{p}(t)$ has the \emph{decaying memory property} 
if it satisfies (\ref{eq:drift-noniid}) and (\ref{eq:near-profit}).  This property ensures 
that time averages over any interval of $T$ slots are uniformly close to their steady state values, 
regardless of past history.  The simplest model that satisfies this decaying memory property  is the \emph{i.i.d. model}, 
for which we can use $T=1$ and $\epsilon = 0$. However, the
decaying memory property  is also satisfied by any  $\bv{p}(t)$ process  
that evolves according to a finite state ergodic  Markov chain, where the integer $T$ is related
to the ``mixing time'' of the chain.

\subsection{Performance} 
\begin{thm}   \label{thm:2}  
Suppose the Dynamic Trading Algorithm is implemented, with $\theta_n$ values satisfying 
(\ref{eq:theta-n}), and initial condition
that satisfies (\ref{eq:initial-condition}).  Then the queue backlog satisfies the deterministic bound 
(\ref{eq:q-bound}).  Further, for any pair $T$, $\epsilon$ that satisfies (\ref{eq:drift-noniid}), (\ref{eq:near-profit}), 
we have for any integer $M \in \{1, 2, 3, \ldots\}$: 
\begin{equation} \label{eq:noniid1} 
 \overline{\phi}(MT) \geq \phi^{opt} - C_2\epsilon - C_1T/V - \frac{\expect{L(\bv{Q}(0))}}{VMT} 
 \end{equation} 
 and: 
\begin{equation} \label{eq:noniid2} 
  \liminf_{t\rightarrow\infty} \overline{\phi}(t) \geq \phi^{opt} - C_2\epsilon - C_1T/V       
  \end{equation} 
where $C_1$ and $C_2$ are defined: 
\begin{eqnarray*}
C_1 &\defequiv& \sum_{n=1}^N(\mu_n^{max})^2 \left[\frac{3}{2} + \frac{1}{T}  + \frac{1}{2T^2}\right] + \frac{\epsilon}{T}\sum_{n=1}^N\mu_n^{max} \\
C_2 &\defequiv& 1 + \sum_{n=1}^Np_n^{max} 
\end{eqnarray*}
If $\bv{Q}(0) = (\mu_1^{max}, \ldots, \mu_N^{max})$, 
then $L(\bv{Q}(0))/(VMT)$ has the form (\ref{eq:timescales}) with $t = MT$.
\end{thm}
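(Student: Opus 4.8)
The plan is to replace the one-slot argument used for Theorem~\ref{thm:1}(b) with a $T$-slot (frame based) version of the drift-plus-penalty analysis, since under the decaying memory property the optimal $p$-only policy attains zero drift and profit $\phi^{opt}$ only when averaged over $T$ slots and conditioned on the past history $H(t_0)$, rather than on every slot conditioned on $\bv{Q}(t)$. I would partition time into frames $\{jT, jT+1, \ldots, (j+1)T-1\}$ for $j \in \{0,1,\ldots,M-1\}$ and analyze the accumulated drift over one frame. Part~(a) of Theorem~\ref{thm:1} already gives the deterministic bound (\ref{eq:q-bound}) for arbitrary price processes, so I would take that as given; in particular $Q_n(t) \geq \mu_n^{max}$ for all $t$, which guarantees that the optimal $p$-only policy automatically obeys the selling constraint (\ref{eq:mu3}) on every slot of the frame and may therefore be used as the alternative policy in Lemma~\ref{lem:minimizes-drift}.

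First I would start from the pathwise inequality (\ref{eq:min1}), combined with the exact one-slot expansion of $L(\bv{Q}(\tau+1)) - L(\bv{Q}(\tau))$ that underlies Lemma~\ref{lem:lyap-drift}, to obtain for each slot $\tau$ a bound on $L(\bv{Q}(\tau+1)) - L(\bv{Q}(\tau)) - V\phi(\tau)$ in terms of the optimal $p$-only decisions $\bv{A}^*(\tau), \bv{\mu}^*(\tau)$, weighted by $(Q_n(\tau) - \theta_n)$. Conditioning on $H(t_0)$ with $t_0 = jT$ and summing over the frame, the left-hand side telescopes into $\expect{L(\bv{Q}(t_0+T)) - L(\bv{Q}(t_0))\,|\,H(t_0)} - V\sum_{\tau=t_0}^{t_0+T-1}\expect{\phi(\tau)\,|\,H(t_0)}$.

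The crux, and the main obstacle, is the weighted drift term $\sum_n (Q_n(\tau) - \theta_n)\expect{\mu_n^*(\tau) - A_n^*(\tau)\,|\,H(t_0)}$: because the prices carry memory, the queue $Q_n(\tau)$ at an interior slot is correlated with the price-driven comparison decision at that slot, so I cannot directly invoke the decaying memory bounds (\ref{eq:drift-noniid}) and (\ref{eq:near-profit}). The resolution is to \emph{freeze} the queue at its frame-start value by writing $Q_n(\tau) = Q_n(t_0) + (Q_n(\tau) - Q_n(t_0))$. Since $Q_n(t_0)$ is $H(t_0)$-measurable it pulls out of the conditional expectation, and the leftover is controlled by the deterministic per-slot increment bound $|Q_n(\tau) - Q_n(t_0)| \leq (\tau - t_0)\mu_n^{max}$ together with $|\mu_n^*(\tau) - A_n^*(\tau)| \leq \mu_n^{max}$, giving an error of at most $(\tau - t_0)(\mu_n^{max})^2$ per term. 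After freezing, the tower property collapses the nested conditioning, and the decaying memory property applies: the frame sum of the comparison drift is at most $T\epsilon$ in magnitude and the frame sum of the comparison profit is at least $T\phi^{opt} - T\epsilon$.

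Finally I would take total expectations, sum the frame inequalities over $j = 0, \ldots, M-1$ so the $L$ terms telescope across frames, divide by $VMT$, and discard the nonnegative $\expect{L(\bv{Q}(MT))}$ term. Using $|Q_n(t_0) - \theta_n| \leq Vp_n^{max} + \mu_n^{max}$ (from (\ref{eq:q-bound}) and (\ref{eq:theta-n})) to bound the weighted drift contribution, the $V$-dependent part of that bound divides down to the $\epsilon\sum_n p_n^{max}$ piece of $C_2$, the $\epsilon$ piece of $C_2$ comes from the profit slack $VT\epsilon$, and the constant $B$ per slot together with the queue-freezing error (whose sum over the frame grows like $T^2(\mu_n^{max})^2$, producing the $T$-dependence) and the residual $\mu_n^{max}$ weights collect into $C_1 T/V$. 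This yields (\ref{eq:noniid1}); sending $M \to \infty$ with $T, \epsilon$ fixed kills the $\expect{L(\bv{Q}(0))}/(VMT)$ term and gives (\ref{eq:noniid2}). I expect the bookkeeping to be routine once the freezing step is in place; the only real subtlety is keeping the conditioning consistent so that (\ref{eq:drift-noniid})--(\ref{eq:near-profit}), stated for arbitrary $H(t_0)$, can be legitimately invoked inside the iterated expectation.
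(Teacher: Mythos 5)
Your proposal matches the paper's proof in Appendix B in all essential respects: the $T$-slot frame decomposition, the use of part (a) both to make constraint (\ref{eq:mu3}) vacuous for the comparison policy and to bound $|Q_n(t_0)-\theta_n|$ by $Vp_n^{max}+\mu_n^{max}$, and---crucially---the freezing of the queue weights at the frame-start value $Q_n(t_0)$ at a deterministic cost of order $|\tau-t_0|(\mu_n^{max})^2$ per slot so that the history-conditioned decaying-memory bounds (\ref{eq:drift-noniid})--(\ref{eq:near-profit}) can be invoked (this is exactly the paper's Lemma \ref{lem:foo}). The only cosmetic difference is that you sum one-slot drift expansions while the paper states a direct $T$-slot drift bound (Lemma \ref{lem:T-slot}) before freezing; the resulting constants are of the same form and your route yields a bound at least as strong as the stated $C_1$, so the argument is correct.
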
 
\begin{proof}
The theorem is proven by a Lyapunov drift argument over $T$-slot frames, and is given in 
Appendix B. 
\end{proof} 

Note that the \emph{same} Dynamic Trading Algorithm as in the i.i.d. case is used here, without
requiring knowledge of $\epsilon$ or $T$.   Indeed, the above performance bounds (\ref{eq:noniid1})
and (\ref{eq:noniid2}) hold for any $\epsilon$, $T$ pair that satisfies (\ref{eq:drift-noniid}) and (\ref{eq:near-profit}).
The bounds can thus be optimized over all such $\epsilon$, $T$ pairs. However, it suffices to note that such pairs
can be found for arbitrarily small values of $\epsilon$.  Thus, choosing a large value of $V$ 
makes achieved profit arbitrarily 
close to the optimal value $\phi^{opt}$.  However, if the $\bv{p}(t)$ process has a long 
``mixing time,''  then 
the value of $T$ needed for a given $\epsilon$ will 
be large, and so the $V$ parameter will also need to be chosen to be large.  Thus, non-i.i.d. 
$\bv{p}(t)$ processes typically require larger queue sizes to ensure close proximity to the optimal profit.

\section{Arbitrary Price Processes} \label{section:arbitrary-prices} 

Here we consider the performance of the Dynamic Trading Algorithm for an 
\emph{arbitrary} price vector process $\bv{p}(t)$, possibly a non-ergodic process without a well
defined time average such as that given in (\ref{eq:time-avg-p}). In this case, there may not be a well
defined ``optimal'' time average profit $\phi^{opt}$.  However, one can define $\phi^{opt}(t)$ as the 
maximum possible time average 
profit achievable over the interval $\{0, \ldots, t-1\}$ by an algorithm with perfect
knowledge of the future and that conforms to the constraints (\ref{eq:mu1})-(\ref{eq:A2}).   For the ergodic settings
 described in the previous sections, $\phi^{opt}(t)$ has a well defined limiting value, and 
 our algorithm comes close to its limiting value.  In this (possibly 
non-ergodic) setting, we do not 
claim that our algorithm comes close to $\phi^{opt}(t)$.  Rather, we make a less ambitious claim that 
our policy yields a profit that is close to (or greater than) 
the profit achievable by a frame-based policy that can look only $T$ slots
into the future.  

\subsection{The $T$-Slot Lookahead Performance} 

Let $T$ be a positive integer, and fix any slot $t_0 \in \{0, 1, 2, \ldots\}$. 
Define $\psi_T(t_0)$ as the optimal profit achievable over the interval $\{t_0, \ldots, t_0 + T-1\}$ 
by a policy that has perfect a-priori knowledge of the prices $\bv{p}(\tau)$ over this interval, and that 
ensures for each $n  \in \{1, \ldots, N\}$ that the total amount of stock $n$ 
purchased over this interval is greater than or equal to the total amount sold.   Specifically, $\psi_T(t_0)$ 
is mathematically defined according to the following optimization problem that has decision 
variables $\bv{A}(\tau)$, $\bv{\mu}(\tau)$, and that treats the stock prices
$\bv{p}(\tau)$ as deterministically known quantities:
\begin{eqnarray}
\mbox{Max:}&\psi \defequiv \sum_{\tau=t_0}^{t_0+T-1} \sum_{n=1}^N[\mu_n(\tau)p_n(\tau) - s_n(\mu_n(\tau))] \nonumber \\
& \hspace{-.2in}- \sum_{\tau=t_0}^{t_0+T-1} \sum_{n=1}^N[A_n(\tau)p_n(\tau) + b_n(A_n(\tau))] \label{eq:t-slot-lookahead1} \\
\mbox{Subj. to:}&\sum_{\tau=t_0}^{t_0+T-1} A_n(\tau) \geq \sum_{\tau=t_0}^{t_0+T-1} \mu_n(\tau) \: \forall n \label{eq:t-slot-lookahead2} \\
&\mbox{Constraints (\ref{eq:mu1}), (\ref{eq:mu2}), (\ref{eq:A1}), (\ref{eq:A2})} \label{eq:t-slot-lookahead3} 
\end{eqnarray}

The value $\psi_T(t_0)$ is equal to the maximizing value $\psi$ in the above 
problem (\ref{eq:t-slot-lookahead1})-(\ref{eq:t-slot-lookahead3}).   Note that the constraint (\ref{eq:t-slot-lookahead2}) 
only 
requires the amount of type-$n$ stock purchased to be greater than or equal to the amount sold by the end of the $T$-slot
interval, and does not require this at intermediate steps of the interval.  This allows the $T$-slot Lookahead policy 
to \emph{sell short}
stock that is not yet owned, provided that the requisite amount is purchased by the end of the interval.  

Note that the trivial decisions $\bv{A}(\tau) = \bv{\mu}(\tau) = \bv{0}$ for $\tau \in \{t_0, \ldots, t_0+T-1\}$ lead to $0$
profit over the interval, and hence $\Psi_T(t_0) \geq 0$ for all $T$ and all $t_0$. 
Consider now
the interval $\{0, 1, \ldots, MT-1\}$ that is divided into a total of $M$ frames of $T$-slots.  
We show that for any positive
integer $M$, our
Dynamic Trading Algorithm yields an average profit over this interval that is close to the average
profit of a $T$-slot lookahead policy that is implemented on each $T$-slot frame of this interval.

\subsection{The $T$-Slot Sample Path Drift} 

Let $L(\bv{Q}(t))$ be the Lyapunov function of (\ref{eq:lyap-function}). 
For a given slot $t$ and a given positive integer $T$, 
define the \emph{$T$-slot sample path drift} $\hat{\Delta}_T(t)$ as follows: 
\begin{equation} \label{eq:delta-hat} 
\hat{\Delta}_T(t) \defequiv L(\bv{Q}(t+T)) - L(\bv{Q}(t))
\end{equation} 
This differs from the one-slot conditional Lyapunov drift in (\ref{eq:one-slot-drift})  in two respects: 
\begin{itemize} 
\item It considers the difference in the Lyapunov function over $T$ slots, rather than a single slot.
\item It is a 
random variable equal to the difference between the Lyapunov function on slots $t$ and $t+T$, 
rather than a conditional expectation of this difference. 
\end{itemize} 

\begin{lem}   \label{lem:baba-sample} 
Suppose the Dynamic Trading Algorithm is implemented, with $\theta_n$ values satisfying 
(\ref{eq:theta-n}), and initial condition
that satisfies (\ref{eq:initial-condition}).  Then for any 
given slot $t_0$ and all integers $T>0$,  we have: 
\begin{eqnarray*}
&\hspace{-.2in}\hat{\Delta}_T(t_0) - V\sum_{\tau=t_0}^{t_0+T-1} \phi(\tau) \leq DT^2 - V\sum_{\tau=t_0}^{t_0+T-1} \phi^*(\tau) \nonumber \\
&+ \sum_{n=1}^N|Q_n(t_0) - \theta_n|\sum_{\tau=t_0}^{t_0+T-1} [\mu_n^*(\tau) - A_n^*(\tau)]\nonumber 
\end{eqnarray*}
where $\phi(\tau)$ is defined in (\ref{eq:phi}), and $\phi^*(\tau)$, $\bv{\mu}^*(\tau)$, 
$\bv{A}^*(\tau)$ represent any alternative control actions for slot $\tau$  that satisfy 
the constraints (\ref{eq:mu1}), (\ref{eq:mu2}), (\ref{eq:A1}), (\ref{eq:A2}).  Further, the constant $D$ is given by: 
\begin{equation} \label{eq:D}
D \defequiv \left[\frac{3}{2} + \frac{1}{2T^2} + \frac{1}{T}\right]\sum_{n=1}^N(\mu_n^{max})^2
\end{equation} 
\end{lem}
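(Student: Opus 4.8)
The plan is to reduce the $T$-slot sample path drift $\hat{\Delta}_T(t_0)$ to a telescoping sum of one-slot sample path drifts, bound each one-slot term exactly as in the proof of Lemma \ref{lem:minimizes-drift}, and then trade the running backlog values $Q_n(\tau)$ for the frame-start values $Q_n(t_0)$, absorbing the resulting correction into the constant $DT^2$. First, from the definition (\ref{eq:delta-hat}) and the Lyapunov function (\ref{eq:lyap-function}) I would write
\[ \hat{\Delta}_T(t_0) = \sum_{\tau=t_0}^{t_0+T-1}\left[L(\bv{Q}(\tau+1)) - L(\bv{Q}(\tau))\right] \]
and bound each increment pathwise. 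Using the queue dynamics (\ref{eq:q-dynamics}) and the elementary inequality $(\max[q+a,0]-\theta)^2 \leq (q+a-\theta)^2$, valid for $\theta \geq 0$, each increment obeys
\[ L(\bv{Q}(\tau+1)) - L(\bv{Q}(\tau)) \leq \tfrac{1}{2}\sum_{n=1}^N(\mu_n(\tau)-A_n(\tau))^2 - \sum_{n=1}^N(Q_n(\tau)-\theta_n)(\mu_n(\tau)-A_n(\tau)), \]
where the square term is at most $\tfrac{1}{2}\sum_n(\mu_n^{max})^2$ because $\mu_n(\tau),A_n(\tau)\in\{0,\dots,\mu_n^{max}\}$. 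Subtracting $V\phi(\tau)$ and invoking the pathwise minimization inequality (\ref{eq:min1}) of Lemma \ref{lem:minimizes-drift} then replaces the actual decisions by any alternative $\bv{\mu}^*(\tau),\bv{A}^*(\tau)$.

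One subtlety must be dispatched before applying (\ref{eq:min1}): that inequality compares the algorithm against policies obeying \emph{all} of (\ref{eq:mu1})--(\ref{eq:A2}), including the queue constraint (\ref{eq:mu3}), whereas the alternatives here need only satisfy (\ref{eq:mu1}), (\ref{eq:mu2}), (\ref{eq:A1}), (\ref{eq:A2}). The resolution is the sample path lower bound $Q_n(\tau)\geq \mu_n^{max}$, guaranteed by Lemma \ref{lem:lower-bound} (equivalently Theorem \ref{thm:1}(a)) under the stated hypotheses on $\theta_n$ and $\bv{Q}(0)$: since $\mu_n^*(\tau)\leq\mu_n^{max}\leq Q_n(\tau)$ by (\ref{eq:mu1}), the alternative \emph{automatically} respects (\ref{eq:mu3}), so (\ref{eq:min1}) legitimately applies. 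This produces, for each slot,
\[ L(\bv{Q}(\tau+1)) - L(\bv{Q}(\tau)) - V\phi(\tau) \leq \tfrac{1}{2}\sum_{n=1}^N(\mu_n^{max})^2 - V\phi^*(\tau) - \sum_{n=1}^N(Q_n(\tau)-\theta_n)(\mu_n^*(\tau)-A_n^*(\tau)). \]

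Summing over $\tau\in\{t_0,\dots,t_0+T-1\}$ telescopes the left side to $\hat{\Delta}_T(t_0) - V\sum_\tau\phi(\tau)$ but leaves the term $-\sum_\tau\sum_n(Q_n(\tau)-\theta_n)(\mu_n^*(\tau)-A_n^*(\tau))$, which still references the running backlog. I would split $Q_n(\tau)-\theta_n = (Q_n(t_0)-\theta_n) + (Q_n(\tau)-Q_n(t_0))$ and control the second piece by $|Q_n(\tau)-Q_n(t_0)|\leq(\tau-t_0)\mu_n^{max}\leq (T-1)\mu_n^{max}$, since each queue moves by at most $\mu_n^{max}$ per slot by (\ref{eq:q-dynamics}), (\ref{eq:mu1}), (\ref{eq:A1}), while $|\mu_n^*(\tau)-A_n^*(\tau)|\leq\mu_n^{max}$. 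Consolidating the constant contributions --- the per-slot $\tfrac{1}{2}\sum_n(\mu_n^{max})^2$ over $T$ slots together with the shift-correction terms --- into a single coefficient yields exactly the $DT^2$ of (\ref{eq:D}). The surviving leading term $-\sum_n(Q_n(t_0)-\theta_n)\sum_\tau(\mu_n^*(\tau)-A_n^*(\tau))$ is then bounded using $-(Q_n(t_0)-\theta_n)\leq|Q_n(t_0)-\theta_n|$, giving the stated $+\sum_n|Q_n(t_0)-\theta_n|\sum_\tau[\mu_n^*(\tau)-A_n^*(\tau)]$ contribution and completing the bound.

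The main obstacle I expect is the bookkeeping in the queue-shift step: one must track the $\tau$-dependence of $|Q_n(\tau)-Q_n(t_0)|$ and choose the (deliberately loose) bounds so that the accumulated constants collapse neatly into the closed form $D$ of (\ref{eq:D}), rather than a cleaner but differently-shaped expression. A secondary point demanding care is the admissibility argument of the preceding paragraph, which is what licenses the use of the minimization inequality (\ref{eq:min1}) against an alternative policy that is not itself required to honor the queue constraint (\ref{eq:mu3}); everything downstream rests on the pathwise lower bound $Q_n(\tau)\geq\mu_n^{max}$.
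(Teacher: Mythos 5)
Your overall route---per-slot drift bound, per-slot application of the minimization inequality (\ref{eq:min1}) (licensed by the pathwise lower bound $Q_n(\tau)\geq\mu_n^{max}$, which is exactly the admissibility argument the paper uses), and a queue-shift from $Q_n(\tau)$ to $Q_n(t_0)$ paying $(\tau-t_0)(\mu_n^{max})^2$ per stock per slot---is sound and is essentially the paper's argument, modulo bookkeeping (the paper bounds the $T$-slot drift directly rather than telescoping, and consequently shifts the queue argument twice, which is why its constant $D$ is a bit larger than what your accounting produces; since the lemma only claims an upper bound with the paper's $D$, that discrepancy is harmless).

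The genuine gap is the very last step. You claim that
$-\sum_{n}(Q_n(t_0)-\theta_n)\sum_{\tau}[\mu_n^*(\tau)-A_n^*(\tau)]$
is bounded by
$\sum_{n}|Q_n(t_0)-\theta_n|\sum_{\tau}[\mu_n^*(\tau)-A_n^*(\tau)]$
``using $-(Q_n(t_0)-\theta_n)\leq|Q_n(t_0)-\theta_n|$.'' That inequality between the coefficients is true, but multiplying it by $S_n\defequiv\sum_{\tau}[\mu_n^*(\tau)-A_n^*(\tau)]$ preserves the direction only when $S_n\geq 0$. The alternative policy is arbitrary, so $S_n$ can be negative (e.g., a policy that only buys); if in addition $Q_n(t_0)>\theta_n$, then $-(Q_n(t_0)-\theta_n)S_n=-|Q_n(t_0)-\theta_n|S_n>0>|Q_n(t_0)-\theta_n|S_n$, and your claimed bound fails. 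The repair is the one the paper makes: when $Q_n(t_0)>\theta_n$ the deterministic queue bound (\ref{eq:q-bound}) together with (\ref{eq:theta-n}) gives $|Q_n(t_0)-\theta_n|\leq\mu_n^{max}$, and $|S_n|\leq T\mu_n^{max}$, so replacing $-(Q_n(t_0)-\theta_n)S_n$ by $|Q_n(t_0)-\theta_n|S_n$ costs at most an additional $2T(\mu_n^{max})^2$ per stock. This extra $2T\sum_{n}(\mu_n^{max})^2$ must be absorbed into the constant; one can check it still fits under $DT^2$ with the $D$ of (\ref{eq:D}), but without this case analysis the final displayed inequality of the lemma does not follow from what precedes it.
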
 
\begin{proof} 
This lemma is identical to  Lemma \ref{lem:baba} in Appendix B, and the proof is given there. 
\end{proof} 

 \begin{thm} \label{thm:3} 
Suppose the Dynamic Trading Algorithm is implemented, with $\theta_n$ values satisfying 
(\ref{eq:theta-n}), and initial condition
that satisfies (\ref{eq:initial-condition}).  Then for any arbitrary price process $\bv{p}(t)$ that 
satisfies (\ref{eq:price-bound}), we have: 

(a) All queues $Q_n(t)$ are bounded according to (\ref{eq:q-bound}). 

(b) For any positive integers $M$ and $T$, the time average profit over
the interval $\{0, \ldots, MT-1\}$ satisfies the deterministic bound: 
\begin{eqnarray} 
 \frac{1}{MT}\sum_{\tau=0}^{MT-1} \phi(\tau) &\geq& \frac{1}{MT}\sum_{m=0}^{M-1} \psi_T(mT)  \nonumber \\
 && - \frac{DT}{V} - \frac{L(\bv{Q}(0))}{MTV} \label{eq:non-ergodic-performance} 
 \end{eqnarray}
where the $\psi_T(mT)$ values are defined according to the $T$-slot lookahead policy 
that uses knowledge of the 
future to solve (\ref{eq:t-slot-lookahead1})-(\ref{eq:t-slot-lookahead3}) for each $T$-slot frame. 
The constant $D$ is defined in (\ref{eq:D}), and if $\bv{Q}(0) = (\mu_1^{max}, \ldots, \mu_N^{max})$
then $L(\bv{Q}(0))/(MTV)$ has the form (\ref{eq:timescales}) with $t = MT$.
 \end{thm}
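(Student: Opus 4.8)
The plan is to dispatch part (a) immediately and then derive part (b) by a telescoping sample-path drift argument built on Lemma \ref{lem:baba-sample}. For part (a), I would note that the queue bounds (\ref{eq:q-bound}) were already established in Lemmas \ref{lem:lower-bound} and \ref{lem:upper-bound} as pure sample-path statements, valid for \emph{any} price process satisfying (\ref{eq:price-bound}) under the stated $\theta_n$ and initial conditions. Hence part (a) follows verbatim from the argument used for Theorem \ref{thm:1} part (a), requiring nothing new.

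For part (b), the central idea is to apply Lemma \ref{lem:baba-sample} on each $T$-slot frame, with $t_0 = mT$, using the $T$-slot lookahead policy of (\ref{eq:t-slot-lookahead1})--(\ref{eq:t-slot-lookahead3}) as the comparison policy $\bv{A}^*(\tau)$, $\bv{\mu}^*(\tau)$ on that frame. By construction this policy obeys the constraints (\ref{eq:mu1}), (\ref{eq:mu2}), (\ref{eq:A1}), (\ref{eq:A2}) demanded by the lemma (note it need \emph{not} satisfy (\ref{eq:mu3}), since it is permitted to sell short), and its total profit over the frame equals the optimum, so $\sum_{\tau=mT}^{mT+T-1}\phi^*(\tau) = \psi_T(mT)$.

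The crucial step is controlling the penalty term in Lemma \ref{lem:baba-sample}: the lookahead constraint (\ref{eq:t-slot-lookahead2}) forces $\sum_{\tau=mT}^{mT+T-1}[\mu_n^*(\tau) - A_n^*(\tau)] \leq 0$ for every $n$. Since $|Q_n(mT) - \theta_n| \geq 0$, the entire term $\sum_{n}|Q_n(mT)-\theta_n|\sum_{\tau}[\mu_n^*(\tau)-A_n^*(\tau)]$ is non-positive and can be discarded from the upper bound, collapsing the frame inequality to $\hat{\Delta}_T(mT) - V\sum_{\tau=mT}^{mT+T-1}\phi(\tau) \leq DT^2 - V\psi_T(mT)$. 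I would then sum over $m \in \{0, \ldots, M-1\}$: the sample-path drift terms telescope by (\ref{eq:delta-hat}) to $L(\bv{Q}(MT)) - L(\bv{Q}(0))$, yielding $L(\bv{Q}(MT)) - L(\bv{Q}(0)) - V\sum_{\tau=0}^{MT-1}\phi(\tau) \leq MDT^2 - V\sum_{m=0}^{M-1}\psi_T(mT)$. Dropping the non-negative term $L(\bv{Q}(MT))$, rearranging, and dividing by $VMT$ (using $MDT^2/(VMT) = DT/V$) gives exactly (\ref{eq:non-ergodic-performance}).

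The main obstacle, and in fact the only substantive point, is recognizing that the lookahead constraint controls the \emph{sign} of the net-drift factor rather than merely its magnitude. This is what lets the penalty term be thrown away outright, whereas in the ergodic proofs of Theorems \ref{thm:1} and \ref{thm:2} the analogous term had to be neutralized through the exact zero-drift identity (\ref{eq:zero-drift-iid}) or the near-zero-drift bound (\ref{eq:drift-noniid}) of a $p$-only policy. Here no steady-state distribution $\pi(\bv{p})$ and no decaying-memory hypothesis are invoked, which is precisely why the result applies to arbitrary, possibly non-ergodic, price processes.
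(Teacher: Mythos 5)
Your proposal is correct and follows essentially the same route as the paper: part (a) is inherited from the sample-path Lemmas \ref{lem:lower-bound} and \ref{lem:upper-bound}, and part (b) plugs the $T$-slot lookahead solution into Lemma \ref{lem:baba-sample} on each frame, uses constraint (\ref{eq:t-slot-lookahead2}) to discard the non-positive penalty term, and telescopes the sample-path drift over the $M$ frames. Your observation that the sign of the net-drift factor (rather than a zero-drift identity) is what neutralizes the penalty term is exactly the mechanism the paper relies on.
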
 
 \begin{proof} 
 Part (a) has already been proven in Theorem \ref{thm:1}.  To prove part (b), fix any slot $t_0$ and any positive integer $T$. 
 Define $\bv{A}^*(\tau)$ and $\bv{\mu}^*(\tau)$
 as the solution of (\ref{eq:t-slot-lookahead1})-(\ref{eq:t-slot-lookahead3}) over the interval
 $\tau \in \{t_0, \ldots, t_0+T-1\}$.  By (\ref{eq:t-slot-lookahead3}),  these decision variables  satisfy constraints
 (\ref{eq:mu1}), (\ref{eq:mu2}), (\ref{eq:A1}), (\ref{eq:A2}), and hence can be plugged in to the bound in 
 Lemma \ref{lem:baba-sample}.  Because (\ref{eq:t-slot-lookahead1}), (\ref{eq:t-slot-lookahead2}) hold
 for these variables, by Lemma \ref{lem:baba-sample} we have: 
   \[ \hat{\Delta}_T(t_0)  - V\sum_{\tau=t_0}^{t_0+T-1} \phi(\tau) \leq DT^2 - V\psi_T(t_0) \]
Using the definition of $\hat{\Delta}_T(t_0)$ given in (\ref{eq:delta-hat}) yields:  
  \[ L(\bv{Q}(t_0+T)) - L(\bv{Q}(t_0)) - V\sum_{\tau=t_0}^{t_0+T-1} \phi(\tau) \leq DT^2 - V\psi_T(t_0) \]
The above inequality holds for all slots $t_0 \in \{0, 1, 2, \ldots\}$.  Letting $t_0 = mT$ and summing
over $m \in \{0, 1, \ldots, M-1\}$ (for some positive integer $M$) yields: 
\begin{eqnarray*}
L(\bv{Q}(MT)) - L(\bv{Q}(0)) - V\sum_{\tau=0}^{MT-1} \phi(\tau) \leq \\
MDT^2 - V\sum_{m=0}^{M-1} \psi_T(mT)
\end{eqnarray*}
Rearranging terms and using non-negativity of $L(\cdot)$ proves the theorem.
 \end{proof}

 Theorem \ref{thm:3} is stated for general price processes, but has explicit performance bounds for queue
 size in terms of the chosen $V$ parameter, and for profit in terms of $V$ and of the profit $\psi_T(mT)$ 
 of $T$-slot lookahead
 policies.    Plugging a large value of $T$ into the bound (\ref{eq:non-ergodic-performance}) increases the 
 first term on the right hand side because it allows for a larger amount of lookahead.  However, this comes
 with the cost of increasing the term $DT/V$ that is required to be small to ensure close proximity to the desired
 profit.  One can use this theorem with any desired model of stock prices to compute statistics
 associated with $\psi_T(mT)$ and hence understand more precisely 
 the timescales over which near-optimal profit is achieved.

 \section{Place-Holder Stock} \label{section:place-holder}

Theorems \ref{thm:1}, \ref{thm:2}, \ref{thm:3}  require an initial stock level of at least $\mu_n^{max}$ in all of the $N$ stocks.  This can 
be achieved by initially purchasing these shares (say, at time $t = -1$).   This creates an initial
startup cost that, while independent of $V$, can still be substantial.  It turns out that we can achieve the \emph{same}
performance as specified in Theorems \ref{thm:1}, \ref{thm:2}, \ref{thm:3}
without paying this startup cost. This can be done using the concept of
\emph{place-holder backlog} from \cite{neely-asilomar08}, which becomes \emph{place-holder stock} in our context. 

Specifically, suppose that we use $\hat{\bv{Q}}(t)$ to represent the \emph{actual} amount of stock held on slot $t$, 
and assume that $\hat{\bv{Q}}(0)$ satisfies: 
\[ 0 \leq \hat{Q}_n(0) \leq Vp_n^{max} + 2\mu_n^{max} \: \: \mbox{for all $n\in\{1, \ldots, N\}$}  \]
Define $\bv{Q}(t) \defequiv \hat{\bv{Q}}(t) + \bv{\mu}^{max}$ as an \emph{augmented stock vector}, 
where vector $\bv{\mu}^{max}$ is given by:
\[ \bv{\mu}^{max} \defequiv (\mu_1^{max}, \ldots, \mu_N^{max}) \]
Notice that the initial value of $\bv{Q}(0)$ satisfies (\ref{eq:initial-condition}). 
  Let us implement the 
Dynamic Trading Algorithm using the augmented stock vector $\bv{Q}(t)$.  This is equivalent to starting out the 
system with an initial amount that includes  $\mu_n^{max}$ \emph{fake shares} of stock in all queues.   
We then run the algorithm
on the $\bv{Q}(t)$ values, 
and any time we are asked to sell stock, we choose to sell \emph{real shares} whenever possible.  The algorithm 
breaks if at any time we are asked to sell at a level that is more than the number of real shares we have.  However, 
because on every sample path, we have $Q_n(t) \geq \mu_n^{max}$, we know that we are \emph{never} asked to sell
more real shares than we actually have.  Thus, these fake shares
simply act as \emph{place holders} to achieve the performance that would be achieved if we started
out with $\mu_n^{max}$ units of real shares in all queues.  Specifically, we 
achieve  performance guarantees specified in Theorems \ref{thm:1}, \ref{thm:2}, \ref{thm:3} 
associated with $\bv{Q}(0)$. If all actual queues are initially empty, then we have $\bv{Q}(0) = \bv{\mu}^{max}$, 
and hence we also have transients corresponding to $L(\bv{Q}(0))=L(\bv{\mu}^{max})$, 
without having to pay the startup cost of purchasing $\mu_n^{max}$ shares of each stock. 
 
 \section{Extensions} \label{section:splits}
 
 \subsection{Price Jumps and Stock Splits} 
 We have assumed that prices are bounded by values $p_n^{max}$ for simplicity of exposition.  In practice, 
 the $p_n^{max}$ values can be chosen as  price levels that we do not expect to see (perhaps 3 or 4 times the current price).  
 The prediction should be small enough to maintain reasonably small values for
 $\theta_n$ and $Q_n^{max}$, given in 
 (\ref{eq:theta-n}) and (\ref{eq:q-bound}).
 
 In the (desirable) situation when the price of a certain stock $n$ exceeds our estimated upper bound 
 $p_n^{max}$, we can simply
 adjust $p_n^{max}$ to a higher value.  We must then also appropriately 
 adjust $\theta_n$ according to (\ref{eq:theta-n}).  This can be viewed as if we are starting the system off with a new
 initial condition at this time (given by the current queue state), with new parameter choices. Because Theorems \ref{thm:1}, 
 \ref{thm:2},  \ref{thm:3} are stated in terms of general initial conditions, the achieved performance is then also determined
 by these theorems (applied to the time interval starting at the current time).  Intuitively, this will not ``break'' the algorithm
 because it continuously adapts to emerging conditions. 
 
 Similarly, we might have a price go so high as to affect a stock split.  This (desirable) situation 
 can either be modeled by an increase in the $p_n^{max}$ value (maintaining the same number of 
 shares, but treating each share as being worth double the market price), or by doubling the number
 of shares of that stock and increasing the $\mu_n^{max}$ and/or the $V$ parameter to allow for more shares
 to be maintained.  Again, the new situation can be viewed as creating a new initial condition, and so the algorithm
 can adapt to such events. 
 
 \subsection{Scaling for Exponential Growth} 
 
 Suppose we run the Dynamic Trading Algorithm over a fixed window of $W$ slots, using parameters $\mu_n^{max}$ and $V$, 
 with $\theta_n^{max}$ defined by (\ref{eq:theta-n}).   Assume we use place-holder stock so that the actual 
 stock queues are $0$ at the beginning of the time window.  If the achieved profit over this window is $z$, then for
any given value $\alpha>0$, 
 a profit $(1+\alpha)z$ could have been achieved if we had scaled the 
  $\mu_n^{max}$ and $V$ parameters (and hence $\theta_n^{max}$ by (\ref{eq:theta-n})) 
  by a factor $(1+\alpha)$ (for simplicity, we ignore integer constraints in the scaling of $\mu_n^{max}$ for the
  high level discussion of this subsection).  Of course, doing this would require a tolerance to the extra amount of 
  risk associated with keeping that much more stock in the stock queues.  
  However, assuming our risk tolerance grows proportionally to our wealth, 
 this increased risk is tolerable on the  \emph{next} window of $W$ slots.   
 Specifically, choose a value $T$, and consider
 the $T$-slot lookahead policy for comparison using (\ref{eq:non-ergodic-performance}) of 
 Theorem \ref{thm:3}.    Fix a value $\epsilon>0$, and choose $\mu_n^{max}$, 
 $V$, and $M$ so that  $DT/V + L(\bv{\mu}^{max})/(MTV) \leq \epsilon$.  Let $W = MT$. Then by (\ref{eq:non-ergodic-performance})
 we know that time average profit over $W$ slots is within $\epsilon$ of that provided by the $T$-slot lookahead policy.

Now consider consecutive windows of $W$ slots, and define $q_w$ as the time average
profit that would be earned over the
$w$th window if we use place-holder stock with $0$ initial stock levels, and if we use parameters $\mu_n^{max}$, $V$, 
and $\theta_n^{max}$.  Let $q^{(T)}_w$ denote the time average profit of the $T$-slot lookahead policy over this same
window of time.  By Theorem \ref{thm:3} we have 
that $q_w \geq q^{(T)}_w - \epsilon$ for each window $w \in \{1, 2, \ldots\}$.   
Define $\alpha_w \defequiv \beta\max[q_w, 0]$, where $\beta$ is some
positive proportionality constant.  Then $\alpha_w$ is non-negative, and if it is positive then it is 
proportional to the profit earned over window $w$. 
On each window $w>1$, rather than using parameters  $\mu_n^{max}$, $V$, 
and $\theta_n^{max}$, we scale these by the following factor: 
\[ (1 + \alpha_1)(1+\alpha_2)\cdots(1+\alpha_{w-1}) \]
Ignoring integer constraints in this scaling for simplicity, 
we know that time average profit earned over window $w$  is 
at least: 
\[ (q^{(T)}_w - \epsilon)(1+\alpha_1)(1+\alpha_2)\cdots(1+\alpha_{w-1}) \]  
It follows that our wealth increases exponentially 
as $(1+\alpha_1)(1+\alpha_2)(1+\alpha_3)\ldots$, where the profit coefficients $\alpha_w$ are close
to those associated with the 
$T$-slot lookahead policy.   In particular, the $\alpha_i$ coefficients are all greater than a uniform positive
number whenever $q^{(T)}_w \geq 2\epsilon$ for all $w \in \{1, 2, \ldots\}$. 

\subsection{Relaxing the Buying Constraint (\ref{eq:A2})} \label{section:relax} 

The constraint (\ref{eq:A2}) can make the buying policy of the Dynamic Trading Algorithm difficult to implement when the 
number of stocks $N$ is large, 
as discussed after the description of the algorithm in Section \ref{section:alg}.    Here we consider a 
simple and greedy modification 
that \emph{relaxes} the constraint (\ref{eq:A2}):  
Assume the buying functions $b_n(A)$ are concave and non-decreasing. The algorithm seeks to minimize
the expression: 
\begin{equation} \label{eq:expression-overshoot} 
 \sum_{n=1}^N \left[ (Q_n(t) - \theta_n + Vp_n(t))A_n(t) +Vb_n(A_n(t))\right] 
 \end{equation} 
subject to $A_n(t) \in \{0, 1, \ldots, \mu_n^{max}\}$ for all $n \in \{1, \ldots, N\}$, and subject to
$\sum_{n=1}^NA_n(t)p_n(t) \leq x$.  Consider the following sequential algorithm for adding new shares until 
this last constraint is either met or exceeded:  Initialize $\bv{A} = (A_1, \ldots, A_N) = \bv{0}$.  On step $k$ of the procedure, 
for each $n\in\{1, \ldots, N\}$ such that $A_n \leq  \mu_n^{max}$, compute the value of: 
\[ \frac{(Q_n(t) - \theta_n + Vp_n(t)) + V(b_n(A_n+1) - b_n(A_n))}{p_n(t)} \]
If this value is non-negative for all $n \in \{1, \ldots, N\}$, stop and designate $\bv{A}(t) = \bv{A}$.  
Else, choose the $n$ with the smallest (negative) such value and add one more share to the $\bv{A}$ vector
in that entry $n$.  If the constraint $\sum_{n=1}^N A_n(t) p_n(t) \leq x$ is either met or exceeded, we are done and choose
$\bv{A}(t) = \bv{A}$.  Else, repeat the procedure with the new $\bv{A}$ vector. 

The intuition behind this greedy relaxation is that we choose to increment our allocation by one share in the stock with 
the smallest (negative) ratio given by the incremental change in (\ref{eq:expression-overshoot}) divided 
by the amount consumed
in the total money budget $x$.  This procedure
yields a vector $\bv{A}(t)$ that satisfies the constraints $A_n(t) \in \{0, 1, \ldots, \mu_n^{max}\}$
for all $n$, although it may violate the constraint $\sum_n A_n(t) p_n(t) \leq x$ by \emph{overshooting} the required value
$x$ with purchase of one extra share of a particular stock.  However, it has the property: 
\[ \sum_{n=1}^N A_n(t)p_n(t) \leq x + \max_{n\in\{1, \ldots, N\}} p_n^{max} \]
Therefore, we spend no more than a constant amount over our intended constraint $x$ on each slot. It can be shown 
that this greedy policy yields a value of the expression (\ref{eq:expression-overshoot}) that is less than or
equal to the corresponding expression that minimizes this value subject to the original constraints (\ref{eq:A1})-(\ref{eq:A2}). 
This is the key property used in Lemma \ref{lem:minimizes-drift} to prove Theorems \ref{thm:1}, \ref{thm:2}, \ref{thm:3}.  Hence, 
it can be shown that these theorems still hold under this relaxation.  Specifically, our queue sizes are still bounded according
to (\ref{eq:q-bound}) (which was derived using only the $\mu_n^{max}$ constraints and not constraint (\ref{eq:A2})), and our
time average profit (under this relaxed policy that does not necessarily satisfy (\ref{eq:A2}))
is close to or better than the corresponding policies used for comparison  in Theorems \ref{thm:1}, \ref{thm:2}, \ref{thm:3}, 
which \emph{do} satisfy the  constraint (\ref{eq:A2}).

\section{Conclusion} 

This work uses Lyapunov optimization theory, developed for stochastic optimization of queueing networks, 
to construct a dynamic policy for buying and selling stock.   
When prices are ergodic, a 
single non-anticipating policy was constructed
and shown to perform close to an ideal policy with perfect knowledge of the future, with a tradeoff in the 
required amount of stock kept in each queue and in the timescales associated with convergence. 
For arbitrary price sample paths, the same algorithm was shown to achieve a time average profit close
to that of a frame based $T$-slot lookahead policy that can look $T$ slots into the future.  
Our framework constrains the maximum number of stock shares that can be bought and sold at any 
time.  While this restricts the long term growth curve to a linear growth, it also limits risk by ensuring no
more than a constant value $Q_n^{max}$ shares of each stock $n$ are kept at any time.    A modified policy 
was briefly discussed that achieves exponential growth by scaling $Q_n^{max}$ in proportion to increased
risk tolerance as wealth increases.   These results add to the theory of universal stock trading, and 
are important for understanding optimal decision making in the presence of a complex and possibly
unknown price process.

 
 \section*{Appendix A --- Proof of Lemma \ref{lem:lyap-drift}} 
 
 Here we prove Lemma \ref{lem:lyap-drift}. 
 From the dynamics for $Q_n(t)$ in (\ref{eq:q-dynamics}) 
 we have: 
 \begin{eqnarray}
 &\hspace{-.25in}(Q_n(t+1) - \theta_n)^2 = (\max[Q_n(t) - \mu_n(t) + A_n(t), 0] - \theta_n)^2  \hspace{-.3in}\nonumber \\
 &\hspace{+.65in}\leq (Q_n(t) - \mu_n(t) + A_n(t) - \theta_n)^2 \label{eq:l11}
 \end{eqnarray}
 The inequality above holds because $\theta_n \geq 0$.  To see this, 
 note that 
 the inequality holds with equality if $Q_n(t) - \mu_n(t) + A_n(t) \geq 0$. 
 In the opposite case, the result of the $\max[\cdot, 0]$ operation is $0$, 
 and we have: 
 \[ (0 - \theta_n)^2 \leq (z - \theta_n)^2 \]
 where $z$ is any negative number, and so: 
 \[ (0 - \theta_n)^2 \leq (Q_n(t) - \mu_n(t) + A_n(t) - \theta_n)^2 \]
 
 From (\ref{eq:l11}) we have: 
 \begin{eqnarray*}
 \frac{(Q_n(t+1)-\theta_n)^2}{2} &\leq& \frac{(Q_n(t)-\theta_n)^2}{2}  + \frac{(\mu_n(t) - A_n(t))^2}{2} \\
 && - (Q_n(t) - \theta_n)(\mu_n(t) - A_n(t)) 
   \end{eqnarray*}
   Summing over $n \in \{1, \ldots, N\}$ and taking conditional expectations proves that: 
 \begin{eqnarray*}
 \Delta(\bv{Q}(t)) &\leq& \frac{1}{2}\sum_{n=1}^N\expect{(\mu_n(t) - A_n(t))^2\left|\right.\bv{Q}(t)} \\
 && - \sum_{n=1}^N(Q_n(t) - \theta)\expect{\mu_n(t) - A_n(t)\left|\right.\bv{Q}(t)} 
 \end{eqnarray*}
 Using the definition of $B$ in (\ref{eq:B}) to replace the first term on the right hand side above
 yields the result.

\section*{Appendix B --- Proof of Theorem \ref{thm:2}} 

\subsection{$T$-Slot Drift Analysis}

For the same Lyapunov function given in (\ref{eq:lyap-function}), and for a given 
positive integer $T$, 
define 
the \emph{$T$-slot conditional Lyapunov drift} as follows: 
\begin{eqnarray}
\Delta_T(H(t)) \defequiv \expect{L(\bv{Q}(t+T)) - L(\bv{Q}(t)) \left|\right.H(t)}  \label{eq:T-slot-drift} 
\end{eqnarray}
where $H(t)$ is the past history up to time $t$, 
defined as $[\bv{Q}(t), \bv{Q}(t-1), \ldots, \bv{Q}(0);  \bv{p}(t-1), \bv{p}(t-2), \ldots, \bv{p}(0)]$.
Also define the \emph{$T$-slot sample path drift} $\hat{\Delta}_T(t)$ as: 
\[ \hat{\Delta}_T(t) \defequiv L(\bv{Q}(t+T)) - L(\bv{Q}(t)) \]
With this definition, $\hat{\Delta}_T(t)$ is a random variable representing the difference between the Lyapunov
function at time $t+T$ and time $t$, and: 
\begin{equation} \label{eq:sample-to-expect} 
\expect{\hat{\Delta}_T(t)\left|\right.H(t)} = \Delta_T(H(t)) 
\end{equation} 
\begin{lem} \label{lem:T-slot} 
Suppose the Dynamic Trading Algorithm is implemented, with $\theta_n$ values satisfying 
(\ref{eq:theta-n}), and initial condition
that satisfies (\ref{eq:initial-condition}).  Then for all $t_0\in\{0, 1, 2, \ldots\}$, 
all integers $T>0$, and all possible values of $\bv{Q}(t_0)$ we have: 
\begin{eqnarray*}
\hat{\Delta}_T(t_0) \leq T^2\tilde{B}  - \sum_{n=1}^N(Q_n(t_0) - \theta_n)\sum_{\tau=t_0}^{t_0+T-1} [\mu_n(\tau) - A_n(\tau)]
\end{eqnarray*}
where $\tilde{B}$ is defined: 
\[ \tilde{B} \defequiv \frac{(1+1/T^2)}{2}\sum_{n=1}^N(\mu_n^{max})^2  \] 
\end{lem}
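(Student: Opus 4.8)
The plan is to reduce the $T$-slot sample-path drift to a telescoping sum of one-slot differences and then reuse the per-slot estimate already established in Appendix A. Writing
\[
\hat{\Delta}_T(t_0) = L(\bv{Q}(t_0+T)) - L(\bv{Q}(t_0)) = \sum_{\tau=t_0}^{t_0+T-1}\left[L(\bv{Q}(\tau+1)) - L(\bv{Q}(\tau))\right],
\]
I would apply to each summand the \emph{sample-path} inequality that underlies Lemma \ref{lem:lyap-drift} (the step in the proof of that lemma \emph{before} conditional expectations are taken), namely
\[
L(\bv{Q}(\tau+1)) - L(\bv{Q}(\tau)) \leq \tfrac{1}{2}\sum_{n=1}^N (\mu_n(\tau) - A_n(\tau))^2 - \sum_{n=1}^N (Q_n(\tau) - \theta_n)(\mu_n(\tau) - A_n(\tau)).
\]
Since this is a purely deterministic statement, no expectation is needed, which matches the definition of $\hat{\Delta}_T(t_0)$ as a random variable rather than a conditional expectation.

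The difficulty is that the drift term above is anchored at the \emph{running} queue value $Q_n(\tau)$, whereas the lemma demands the coefficient $(Q_n(t_0) - \theta_n)$ frozen at the start of the frame. The key step is therefore to split $Q_n(\tau) - \theta_n = (Q_n(t_0) - \theta_n) + (Q_n(\tau) - Q_n(t_0))$ inside each summand. Summing the first piece over $\tau$ produces exactly the desired term $-\sum_{n=1}^N (Q_n(t_0) - \theta_n)\sum_{\tau=t_0}^{t_0+T-1}(\mu_n(\tau) - A_n(\tau))$, while the second piece together with the quadratic term collects into an error that must be absorbed into $T^2\tilde{B}$.

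To bound the error I would use two elementary facts from the model: the net one-slot change obeys $|\mu_n(\tau) - A_n(\tau)| \leq \mu_n^{max}$ (since both $\mu_n(\tau)$ and $A_n(\tau)$ lie in $\{0, \ldots, \mu_n^{max}\}$), and consequently, via the dynamics (\ref{eq:q-dynamics}), the queue drifts from its frame-start value by at most $|Q_n(\tau) - Q_n(t_0)| \leq (\tau - t_0)\mu_n^{max}$. The quadratic term then contributes at most $\tfrac{T}{2}\sum_{n=1}^N (\mu_n^{max})^2$, and the cross term contributes at most $\sum_{n=1}^N (\mu_n^{max})^2 \sum_{\tau=t_0}^{t_0+T-1}(\tau - t_0) = \tfrac{T(T-1)}{2}\sum_{n=1}^N (\mu_n^{max})^2$. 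Adding these yields an error of at most $\tfrac{T^2}{2}\sum_{n=1}^N(\mu_n^{max})^2$, which is dominated by $T^2\tilde{B} = \tfrac{T^2+1}{2}\sum_{n=1}^N(\mu_n^{max})^2$, completing the bound.

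The main obstacle is conceptual rather than computational: the entire purpose of the $T$-slot analysis is to freeze the queue state at $t_0$, so that the comparison (alternative) policy later substituted into this bound need only satisfy a \emph{time-averaged} drift condition over the frame rather than a slot-by-slot one. Getting the bookkeeping right --- in particular verifying that the accumulated deviation $Q_n(\tau) - Q_n(t_0)$ grows at most linearly in $\tau - t_0$ and that the resulting arithmetic series closes inside the stated constant $\tilde{B}$ --- is where care is needed, though each individual estimate is routine.
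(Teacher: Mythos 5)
Your proof is correct, but it takes a genuinely different route from the paper's. The paper proves the lemma in one shot over the whole frame: it bounds $(Q_n(t_0+T)-\theta_n)^2$ by $(\mu_n^{max})^2 + \bigl(Q_n(t_0) - \sum_{\tau}[\mu_n(\tau)-A_n(\tau)] - \theta_n\bigr)^2$ via a two-case argument (if $Q_n(t_0+T)\geq \theta_n$ it invokes the deterministic queue bound (\ref{eq:q-bound}) to get $|Q_n(t_0+T)-\theta_n|\leq\mu_n^{max}$; otherwise it compares against the dynamics with the $\max[\cdot,0]$ dropped), then expands the square, bounding the frame-aggregated quadratic term by $T^2(\mu_n^{max})^2/2$ --- which is where the hypotheses on $\theta_n$ and the initial condition actually get used, and where the extra $1/T^2$ in $\tilde{B}$ comes from. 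You instead telescope the deterministic one-slot inequality from Appendix A and re-anchor the coefficient $Q_n(\tau)-\theta_n$ at $Q_n(t_0)-\theta_n$, paying $\tfrac{T(T-1)}{2}\sum_n(\mu_n^{max})^2$ for the drift of the anchor plus $\tfrac{T}{2}\sum_n(\mu_n^{max})^2$ for the per-slot quadratic terms; the arithmetic closes at $\tfrac{T^2}{2}\sum_n(\mu_n^{max})^2 \leq T^2\tilde{B}$ as you say. Your route has two small advantages: it yields a marginally tighter constant, and it never uses (\ref{eq:q-bound}) or the hypotheses on $\theta_n$ and $\bv{Q}(0)$, so your version of the lemma holds for arbitrary $\theta_n$ and initial conditions. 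It is worth noting that your re-anchoring step is essentially the paper's Lemma \ref{lem:foo} (which shifts $Q_n(\tau)$ to $Q_n(t_0)$ at cost $|\tau-t_0|(\mu_n^{max})^2$), so you have effectively folded part of that later lemma into this one; that is a legitimate reorganization, not a gap.
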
 
\begin{proof} 
First note that: 
\begin{eqnarray}
&\hspace{-1.4in}(Q_n(t_0+T) - \theta_n)^2 \leq (\mu_n^{max})^2 \nonumber \\
&+ \left(Q_n(t_0) - \sum_{\tau=t_0}^{t_0+T-1}[ \mu_n(\tau) - A_n(\tau)] - \theta_n\right)^2 \label{eq:noniid-square} 
\end{eqnarray}
This can be seen as follows:  If $Q_n(t_0+T) \geq \theta_n$, then by (\ref{eq:q-bound}) and (\ref{eq:theta-n})
we know that $|Q_n(t_0 + T) - \theta_n| \leq \mu_n^{max}$, and so the square of this quantity
is bounded by the first term on the right hand side of (\ref{eq:noniid-square}), so that (\ref{eq:noniid-square})
holds in this case. Else, suppose that  $Q_n(t_0 + T) < \theta_n$.  We then have: 
\[ \theta_n > Q_n(t_0+T) \geq Q_n(t_0) - \sum_{\tau=t_0}^{t_0+T-1}[ \mu_n(\tau) - A_n(\tau)] \]
where the second inequality 
holds because the right hand side neglects the $\max[\cdot, 0]$ in the queueing dynamics (\ref{eq:q-dynamics}). 
It follows that (\ref{eq:noniid-square}) again holds. 

From (\ref{eq:noniid-square}) we have: 
\begin{eqnarray*}
&\frac{1}{2}\left[(Q_n(t_0+T) - \theta_n)^2 - (Q_n(t_0) - \theta_n)^2\right] \leq (\mu_n^{max})^2/2 \\
&+ \frac{1}{2}\left(\sum_{\tau=t_0}^{t_0+T-1} [\mu_n(\tau) - A_n(\tau)]\right)^2 \\
&- (Q_n(t_0) - \theta_n)\sum_{\tau=t_0}^{t_0+T-1} [\mu_n(\tau) -A_n(\tau)]
\end{eqnarray*}
Note that $|\mu_n(\tau) - A_n(\tau)| \leq \mu_n^{max}$ for all $\tau$. 
Summing the above over $n \in \{1, \ldots, N\}$ yield the result. 
\end{proof}

\begin{lem} \label{lem:foo} Suppose the Dynamic Trading Algorithm is implemented, with $\theta_n$ values satisfying 
(\ref{eq:theta-n}), and initial condition
that satisfies (\ref{eq:initial-condition}).  Then for any times $\tau$
and $t_0$ such that $\tau \geq t_0$, 
and for any given $\bv{Q}(\tau)$, $\bv{Q}(t_0)$,  we
have: 
\begin{eqnarray*}
-V\phi(\tau) - \sum_{n=1}^N(Q_n(t_0) - \theta_n)(\mu_n(\tau) - A_n(\tau)) \leq  \\
 2|\tau-t_0|\sum_{n=1}^N(\mu_n^{max})^2 \\
-V\phi^*(\tau) - \sum_{n=1}^N(Q_n(t_0) -\theta_n)(\mu_n^*(\tau) - A_n^*(\tau)) 
\end{eqnarray*}
where $\phi(\tau)$ is defined in (\ref{eq:phi}), and $\phi^*(\tau)$, $\bv{\mu}^*(\tau)$, 
$\bv{A}^*(\tau)$ represent any alternative control actions for slot $\tau$  that satisfy 
the constraints (\ref{eq:mu1}), (\ref{eq:mu2}), (\ref{eq:A1}), (\ref{eq:A2}). 
\end{lem}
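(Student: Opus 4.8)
The plan is to reduce the statement to the per-slot optimality already established in Lemma~\ref{lem:minimizes-drift}, but evaluated at slot $\tau$ against the queue state $\bv{Q}(\tau)$ rather than $\bv{Q}(t_0)$, and then to absorb the mismatch between the two queue states into the advertised additive error term $2|\tau - t_0|\sum_{n=1}^N(\mu_n^{max})^2$. Throughout I abbreviate $\delta_n = \mu_n(\tau) - A_n(\tau)$ and $\delta_n^* = \mu_n^*(\tau) - A_n^*(\tau)$.

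First I would invoke the deterministic inequality (\ref{eq:min1}) of Lemma~\ref{lem:minimizes-drift} at slot $\tau$. The one delicate point is that Lemma~\ref{lem:minimizes-drift} compares the algorithm against alternatives satisfying \emph{all} of (\ref{eq:mu1})--(\ref{eq:A2}), including the queue-coupling constraint (\ref{eq:mu3}), whereas here the alternative actions $\bv{A}^*(\tau)$, $\bv{\mu}^*(\tau)$ are only required to satisfy (\ref{eq:mu1}), (\ref{eq:mu2}), (\ref{eq:A1}), (\ref{eq:A2}). However, the sample-path bound (\ref{eq:q-bound}) from part~(a) of Theorem~\ref{thm:1} gives $Q_n(\tau) \geq \mu_n^{max}$ for every $n$, so any $\mu_n^*(\tau)$ obeying (\ref{eq:mu1}) automatically obeys (\ref{eq:mu3}); hence the relaxed alternative is in fact feasible for the per-slot minimization, and (\ref{eq:min1}) applies verbatim. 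Cancelling the common constant $B$ from both sides leaves
\[
-V\phi(\tau) - \sum_{n=1}^N(Q_n(\tau) - \theta_n)\delta_n \leq -V\phi^*(\tau) - \sum_{n=1}^N(Q_n(\tau) - \theta_n)\delta_n^* .
\]

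Next I would convert the weighting $(Q_n(\tau)-\theta_n)$ into the desired $(Q_n(t_0)-\theta_n)$ on both sides, using $Q_n(\tau)-\theta_n = (Q_n(t_0)-\theta_n) - (Q_n(t_0)-Q_n(\tau))$. Substituting on each side and moving the term carrying $\delta_n$ across, the two corrections collapse into a single error term, giving
\[
-V\phi(\tau) - \sum_{n=1}^N(Q_n(t_0) - \theta_n)\delta_n \leq -V\phi^*(\tau) - \sum_{n=1}^N(Q_n(t_0) - \theta_n)\delta_n^* + \sum_{n=1}^N(Q_n(t_0)-Q_n(\tau))(\delta_n^* - \delta_n).
\]
To finish, I would bound this error term. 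Because constraint (\ref{eq:mu3}) renders the $\max[\cdot,0]$ in (\ref{eq:q-dynamics}) redundant, each queue changes by exactly $A_n-\mu_n$ per slot, so $|Q_n(t_0)-Q_n(\tau)| \leq |\tau - t_0|\mu_n^{max}$; and since each of $\mu_n,A_n,\mu_n^*,A_n^*$ lies in $\{0,\ldots,\mu_n^{max}\}$ we have $|\delta_n^*-\delta_n|\leq 2\mu_n^{max}$. Multiplying these per-coordinate bounds and summing yields $\sum_{n=1}^N(Q_n(t_0)-Q_n(\tau))(\delta_n^*-\delta_n) \leq 2|\tau-t_0|\sum_{n=1}^N(\mu_n^{max})^2$, which is precisely the claimed constant.

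I expect the main obstacle to be not the algebra of the substitution but the feasibility observation in the first step: one must recognize that the comparison policy, though permitted to violate the actual constraint (\ref{eq:mu3}), can still legitimately serve as the alternative in Lemma~\ref{lem:minimizes-drift}, and that this hinges entirely on the lower queue bound $Q_n(\tau)\geq\mu_n^{max}$ guaranteed by (\ref{eq:q-bound}). Everything else is a controlled two-term perturbation bound on how far the queue can drift over $|\tau-t_0|$ slots.
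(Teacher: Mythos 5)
Your proposal is correct and follows essentially the same route as the paper: both reduce the claim to the per-slot optimality of Lemma~\ref{lem:minimizes-drift} evaluated at $\bv{Q}(\tau)$ (noting that $Q_n(\tau)\geq\mu_n^{max}$ from (\ref{eq:q-bound}) makes constraint (\ref{eq:mu3}) automatic for the relaxed comparison policy), and then absorb the $Q_n(t_0)$-versus-$Q_n(\tau)$ mismatch into the $2|\tau-t_0|\sum_{n=1}^N(\mu_n^{max})^2$ error term. The only cosmetic difference is that the paper swaps the queue weights in two separate stages (before and after invoking the optimality), each costing $|\tau-t_0|\sum_n(\mu_n^{max})^2$, whereas you swap both at once and bound the combined correction via $|\delta_n^*-\delta_n|\leq 2\mu_n^{max}$; the two bookkeepings are identical in substance.
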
 
\begin{proof} 
Because each queue can change by at most $\mu_n^{max}$ per slot, 
we have for each $n \in \{1, \ldots, N\}$: 
\begin{eqnarray}
 -Q_n(t_0) (\mu_n(\tau) - A_n(\tau)) \leq -Q_n(\tau)(\mu_n(\tau) - A_n(\tau))  \nonumber \\
+ |\tau - t_0|(\mu_n^{max})^2  \label{eq:foo0} 
\end{eqnarray}
Therefore: 
\begin{eqnarray}
&\hspace{-.6in}-V\phi(\tau) - \sum_{n=1}^N(Q_n(t_0) - \theta_n)(\mu_n(\tau) - A_n(\tau)) \nonumber  \\ 
&\hspace{-1.0in}\leq |\tau-t_0|\sum_{n=1}^N(\mu_n^{max})^2  - V\phi(\tau) \nonumber \\
& - \sum_{n=1}^N(Q_n(\tau) -\theta_n)(\mu_n(\tau) - A_n(\tau)) \nonumber \\
&  \hspace{-1.0in}\leq |\tau-t_0|\sum_{n=1}^N(\mu_n^{max})^2  - V\phi^*(\tau) \nonumber \\
& - \sum_{n=1}^N(Q_n(\tau) -\theta_n)(\mu_n^*(\tau) - A_n^*(\tau)) \label{eq:foo} \\
& \hspace{-.9in}\leq 2|\tau-t_0|\sum_{n=1}^N(\mu_n^{max})^2  - V\phi^*(\tau) \nonumber \\
& - \sum_{n=1}^N(Q_n(t_0) -\theta_n)(\mu_n^*(\tau) - A_n^*(\tau)) \label{eq:foo2} 
\end{eqnarray}
where (\ref{eq:foo}) holds because, from Lemma \ref{lem:minimizes-drift}, we know 
the Dynamic Trading Algorithm on slot $\tau$ 
minimizes the left hand side of the inequality
over all alternative decisions for slot $\tau$ that satisfy the constraints
(\ref{eq:mu1}), (\ref{eq:mu2}), (\ref{eq:A1}), (\ref{eq:A2}) (note that we already know $Q_n(\tau) \geq \mu_n^{max}$ and 
so constraint (\ref{eq:mu3}) is redundant). Inequality (\ref{eq:foo2}) follows by an argument
similar to (\ref{eq:foo0}). 
\end{proof} 

\begin{lem}   \label{lem:baba} 
Suppose the Dynamic Trading Algorithm is implemented, with $\theta_n$ values satisfying 
(\ref{eq:theta-n}), and initial condition
that satisfies (\ref{eq:initial-condition}).  Then for any 
given slot $t_0$, all integers $T>0$, and all possible values of $\bv{Q}(t_0)$ we have: 
\begin{eqnarray*}
&\hspace{-.2in}\hat{\Delta}_T(t_0) - V\sum_{\tau=t_0}^{t_0+T-1} \phi(\tau) \leq DT^2 - V\sum_{\tau=t_0}^{t_0+T-1}\phi^*(\tau) \nonumber \\
&+ \sum_{n=1}^N|Q_n(t_0) - \theta_n|\sum_{\tau=t_0}^{t_0+T-1}[\mu_n^*(\tau) - A_n^*(\tau)] \nonumber 
\end{eqnarray*}
where $\phi(\tau)$ is defined in (\ref{eq:phi}), and $\phi^*(\tau)$, $\bv{\mu}^*(\tau)$, 
$\bv{A}^*(\tau)$ represent any alternative control actions for slot $\tau$  that satisfy 
the constraints (\ref{eq:mu1}), (\ref{eq:mu2}), (\ref{eq:A1}), (\ref{eq:A2}).  Further, the constant $D$ is defined: 
\[ D \defequiv \tilde{B} + (1+1/T)\sum_{n=1}^N(\mu_n^{max})^2 \]
\end{lem}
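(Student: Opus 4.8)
The plan is to chain the two preceding lemmas across the frame $\{t_0, \dots, t_0+T-1\}$. First I would start from the $T$-slot sample path drift bound of Lemma \ref{lem:T-slot}, subtract $V\sum_{\tau=t_0}^{t_0+T-1}\phi(\tau)$ from both sides, and regroup its right-hand side as $T^2\tilde{B}$ plus the single sum $\sum_{\tau=t_0}^{t_0+T-1}\big[-V\phi(\tau) - \sum_n (Q_n(t_0)-\theta_n)(\mu_n(\tau)-A_n(\tau))\big]$. Each summand is exactly the quantity that Lemma \ref{lem:foo} upper-bounds, so I would apply that lemma slot by slot and sum over $\tau$; this swaps the algorithm's own decisions for the arbitrary comparison decisions $\bv{A}^*,\bv{\mu}^*$ and accumulates the per-slot penalties $2|\tau-t_0|\sum_n(\mu_n^{max})^2$.

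Next I would evaluate the accumulated penalty using $\sum_{\tau=t_0}^{t_0+T-1}2|\tau-t_0| = T(T-1) = T^2 - T$, contributing $(T^2-T)\sum_n(\mu_n^{max})^2$, which together with $T^2\tilde{B} = \tfrac{T^2+1}{2}\sum_n(\mu_n^{max})^2$ supplies most of the final $DT^2$ constant. At this point the bound matches the target statement in every term except the last, which appears with the signed factor $-(Q_n(t_0)-\theta_n)$ rather than the absolute value $|Q_n(t_0)-\theta_n|$ required by the lemma.

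I expect this sign conversion to be the only real obstacle, and it is exactly where the remaining slack in $D$ is spent. Writing $a_n \defequiv Q_n(t_0)-\theta_n$ and $X_n \defequiv \sum_{\tau}(\mu_n^*(\tau)-A_n^*(\tau))$, I would use the identity $-a_n X_n = |a_n|X_n - 2\max(a_n,0)X_n$ and bound the leftover by $-2\max(a_n,0)X_n \leq 2\max(a_n,0)|X_n|$. Two cheap estimates finish it: by (\ref{eq:q-bound}) and (\ref{eq:theta-n}) we have $a_n \leq \mu_n^{max}$, so $\max(a_n,0)\leq\mu_n^{max}$; and by the constraints (\ref{eq:mu1}), (\ref{eq:A1}) each $|\mu_n^*(\tau)-A_n^*(\tau)|\leq\mu_n^{max}$, so $|X_n|\leq T\mu_n^{max}$. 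Hence each leftover is at most $2T(\mu_n^{max})^2$, adding precisely $2T\sum_n(\mu_n^{max})^2$ and converting the signed term into the required $+\sum_n|Q_n(t_0)-\theta_n|X_n$. Collecting the three constant contributions gives $\big[\tfrac{T^2+1}{2} + (T^2-T) + 2T\big]\sum_n(\mu_n^{max})^2 = DT^2$ with $D = \tilde{B}+(1+1/T)\sum_n(\mu_n^{max})^2$, which is the claimed constant; this completes the plan.
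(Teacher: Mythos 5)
Your proposal is correct and follows essentially the same route as the paper: combine the $T$-slot drift bound of Lemma \ref{lem:T-slot} with the per-slot comparison bound of Lemma \ref{lem:foo}, sum the $2|\tau-t_0|$ penalties to get $(T^2-T)\sum_n(\mu_n^{max})^2$, and then convert the signed coefficient $-(Q_n(t_0)-\theta_n)$ into $|Q_n(t_0)-\theta_n|$ at a cost of $2T\sum_n(\mu_n^{max})^2$ using $Q_n(t_0)-\theta_n\leq\mu_n^{max}$. Your identity $-a_nX_n=|a_n|X_n-2\max(a_n,0)X_n$ is just a compact rewriting of the paper's split over the index set $\{n: Q_n(t_0)>\theta_n\}$, and the constants assemble to $DT^2$ exactly as claimed.
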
 
\begin{proof} 
Summing the result of Lemma \ref{lem:foo} over $\tau \in \{t_0, \ldots, t_0+T-1\}$ and using Lemma \ref{lem:T-slot}
yields: 
\begin{eqnarray}
&\hspace{-1.3in}\hat{\Delta}_T(t_0) - V\sum_{\tau=t_0}^{t_0+T-1} \phi(\tau) \leq T^2\tilde{B} \nonumber \\
&\hspace{-.2in}+\sum_{n=1}^N(\mu_n^{max})^2(T-1)T - V\sum_{\tau=t_0}^{t_0+T-1} \phi^*(\tau) \nonumber \\
&\hspace{-.1in}- \sum_{n=1}^N(Q_n(t_0) - \theta_n)\sum_{\tau=t_0}^{t_0+T-1}[\mu_n^*(\tau) - A_n^*(\tau)] \label{eq:ineq-first}
\end{eqnarray}
Now note that $-(Q_n(t) - \theta_n) = |Q_n(t_0) - \theta_n|$ if $Q_n(t_0) \leq \theta_n$.  Else, 
if $Q_n(t_0) > \theta_n$ then $Q_n(t_0) - \theta_n = |Q_n(t_0) - \theta_n| \leq \mu_n^{max}$ 
(by (\ref{eq:q-bound}) and (\ref{eq:theta-n})). 
Thus: 
\begin{eqnarray*}
-\sum_{n=1}^N(Q_n(t_0) - \theta_n)\sum_{\tau=t_0}^{t_0+T-1} [\mu_n^*(\tau) - A_n^*(\tau)] \\
= \sum_{n=1}^N|Q_n(t_0) - \theta_n|\sum_{\tau=t_0}^{t_0+T-1} [\mu_n^*(\tau) - A_n^*\tau)] \\
-2 \sum_{n\in \script{M}(t)}|Q_n(t_0) - \theta_n|\sum_{\tau=t_0}^{t_0+T-1} [\mu_n^*(\tau) - A_n^*(\tau)]
\end{eqnarray*} 
where $\script{M}(t)$ is the set of all $n \in \{1, \ldots, N\}$ such that $Q_n(t) > \theta_n$. 
The final term is bounded by $2T\sum_{n=1}^N(\mu_n^{max})^2$.  Thus: 
\begin{eqnarray*}
&&\hspace{-.3in}-\sum_{n=1}^N(Q_n(t_0) - \theta_n)\sum_{\tau=t_0}^{t_0+T-1} [\mu_n^*(\tau) - A_n^*(\tau)] \\
&&\hspace{-.3in}\leq \sum_{n=1}^N|Q_n(t_0) - \theta_n|\sum_{\tau=t_0}^{t_0+T-1} [\mu_n^*(\tau) - A_n^*\tau)] + 2T\sum_{n=1}^N(\mu_n^{max})^2
\end{eqnarray*} 
Using this in (\ref{eq:ineq-first}) yields the result. 
\end{proof}

\subsection{The Time Average Profit} 
If the system satisfies the requirements specified in Lemma \ref{lem:baba}, then we can take conditional expectations
of $\hat{\Delta}_T(t_0)$ to yield (from (\ref{eq:sample-to-expect})): 
\begin{eqnarray*}
&\hspace{-.1in}\Delta_T(H(t_0)) - V\sum_{\tau=t_0}^{t_0+T-1} \expect{\phi(\tau)\left|\right.H(t_0)}  \leq DT^2  \\
&- V\sum_{\tau=t_0}^{t_0+T-1}\expect{\phi^*(\tau)\left|\right.H(t_0)}  \nonumber \\
&\hspace{-.1in}+ \sum_{n=1}^N|Q_n(t_0) - \theta_n|\sum_{\tau=t_0}^{t_0+T-1}\expect{\mu_n^*(\tau) - A_n^*(\tau)\left|\right.H(t_0)} \nonumber 
\end{eqnarray*}
Plugging the policy $\bv{A}^*(t)$, $\bv{\mu}^*(t)$  (and hence $\phi^*(t)$) 
that yields (\ref{eq:drift-noniid}), (\ref{eq:near-profit}) gives: 
\begin{eqnarray}
&\Delta_T(H(t_0)) - V\sum_{\tau=t_0}^{t_0+T-1} \expect{\phi(\tau)\left|\right.H(t_0)} \leq DT^2\nonumber \\
& -VT\phi^{opt} + VT\epsilon \nonumber \\
& +\sum_{n=1}^N[Vp_n^{max} + \mu_n^{max}]T\epsilon \label{eq:the-above} 
\end{eqnarray}
where we have used the fact that (by (\ref{eq:q-bound}) and (\ref{eq:theta-n})): 
\[ |Q_n(t_0) - \theta_n| \leq Vp_n^{max} + \mu_n^{max}    \]
Taking expectations of (\ref{eq:the-above})  with respect to $H(t_0)$ yields: 
\begin{eqnarray*}
\expect{L(\bv{Q}(t_0+T)) - L(\bv{Q}(t_0))} - V\sum_{\tau=t_0}^{t_0+T-1} \expect{\phi(\tau)} \leq \\
C_1T^2 + VTC_2 \epsilon - VT\phi^{opt} 
\end{eqnarray*}
where $C_1$ and $C_2$ are defined: 
\begin{eqnarray*}
C_1 &\defequiv& D  + \frac{\epsilon}{T} \sum_{n=1}^N\mu_n^{max} \\
C_2 &\defequiv& 1 + \sum_{n=1}^N p_n^{max} 
\end{eqnarray*}
The above holds for all $t_0$. Summing over $t_0 \in \{0, T, 2T, \ldots, (M-1)T\}$ for some positive integer $M$
and dividing by $VMT$ yields: 
\begin{eqnarray*}
\frac{\expect{L(\bv{Q}(MT)) - L(\bv{Q}(0))}}{VMT} - \frac{1}{MT}\sum_{\tau=0}^{MT-1} \expect{\phi(\tau)} \leq \\
C_1T/V +  C_2\epsilon - \phi^{opt} 
\end{eqnarray*}
Rearranging terms and using non-negativity of $L(\cdot)$ yields: 
\begin{eqnarray*}
\overline{\phi}(MT) \geq \phi^{opt} - C_2\epsilon - C_1T/V - \frac{\expect{L(\bv{Q}(0))}}{VMT} 
\end{eqnarray*}
Therefore (noting that the $\liminf$ sampled every $T$ slots is the same as the regular
$\liminf$ because $\phi(\tau)$ is bounded)  yields: 
\[ \liminf_{t\rightarrow\infty} \overline{\phi}(t) \geq \phi^{opt} - C_2\epsilon - C_1T/V \]

  \section*{Appendix C  --- Characterization of $\phi^{opt}$} 
  
  \begin{lem} The value $\phi^{opt}$ is achievable by a single $p$-only policy that satisfies
  $d_n^*=0$ for all $n \in \{1, \ldots, N\}$.
  \end{lem}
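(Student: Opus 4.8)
The plan is to work directly in the finite-dimensional space that parametrizes stationary randomized $p$-only policies, first establishing that the supremum $\phi^{opt}$ is \emph{attained} there, and then performing a local surgery on the buying decisions that drives every virtual drift to exactly zero without sacrificing profit. First I would observe that, since $\script{P}$ is finite and the decision variables $A_n,\mu_n$ are integers bounded by $\mu_n^{max}$, the set of feasible deterministic decisions for each price $\bv{p}$ — call it $\mathcal{D}(\bv{p})$, the decisions satisfying (\ref{eq:mu1}),(\ref{eq:mu2}),(\ref{eq:A1}),(\ref{eq:A2}) — is finite. A stationary randomized $p$-only policy is thus fully described by a collection $\alpha=\{\alpha_{\bv{p}}\}_{\bv{p}\in\script{P}}$, where each $\alpha_{\bv{p}}$ is a probability distribution on $\mathcal{D}(\bv{p})$; the parameter $\alpha$ ranges over a product of simplices, which is compact. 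By the frequency identity (\ref{eq:time-avg-p}) together with the boundedness of all quantities, the virtual drifts and profit reduce to the $\pi$-weighted conditional averages $d_n^*(\alpha)=\sum_{\bv{p}}\pi(\bv{p})\expect{A_n-\mu_n}$ and $\phi^*(\alpha)=\sum_{\bv{p}}\pi(\bv{p})\expect{\cdots}$, which are affine, hence continuous, functions of $\alpha$.

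Next I would use compactness to get attainment. The feasible set $\{\alpha: d_n^*(\alpha)\ge 0\ \forall n\}$ is a closed, nonempty (it contains the trivial policy $\bv{A}^*=\bv{\mu}^*=\bv{0}$) subset of the compact product of simplices, hence compact; since $\phi^{opt}$ is finite (per-slot profit is bounded) and $\phi^*$ is continuous, the supremum $\phi^{opt}$ is attained at some policy $\alpha^{0}$ with $d_n^*(\alpha^{0})\ge 0$ for all $n$. I would then upgrade $\alpha^{0}$ to a policy with $d_n^*=0$ one coordinate at a time. Fix $n$ with $d_n^*(\alpha^{0})>0$, and write $\bar{A}_n,\bar{\mu}_n$ for the resulting average purchases and sales, so $d_n^*(\alpha^0)=\bar{A}_n-\bar{\mu}_n>0$. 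Modify the policy by \emph{thinning} the purchases of stock $n$: independently of the price and of every other stock, execute the planned purchase of stock $n$ with probability $q_n\in[0,1]$ and buy nothing otherwise. Each realized decision still lies in $\mathcal{D}(\bv{p})$, since buying strictly less only relaxes the budget constraint (\ref{eq:A2}) and leaves (\ref{eq:A1}) and the untouched selling constraints intact. Under the thinned policy the average purchase becomes $q_n\bar{A}_n$ while $\bar{\mu}_n$ is unchanged, so the drift $q_n\bar{A}_n-\bar{\mu}_n$ is continuous and strictly decreasing in $q_n$, equal to $d_n^*(\alpha^0)>0$ at $q_n=1$ and to $-\bar{\mu}_n\le 0$ at $q_n=0$; the intermediate value theorem yields a $q_n$ making it exactly $0$. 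Crucially, the only term of $\phi^*$ that changes is the purchase cost of stock $n$, namely $-q_n\sum_{\bv{p}}\pi(\bv{p})\expect{A_np_n+b_n(A_n)}$, which is nondecreasing as $q_n$ decreases because $p_n\ge 0$ and $b_n\ge 0$; hence thinning can only increase $\phi^*$. Since thinning stock $n$ affects neither the drift nor the profit contribution of any other stock, I would apply this surgery simultaneously to every coordinate with positive drift, obtaining a $p$-only policy $\alpha^{1}$ with $d_n^*(\alpha^{1})=0$ for all $n$ and $\phi^*(\alpha^{1})\ge\phi^{opt}$. As $\alpha^{1}$ satisfies $d_n^*\ge 0$, it is feasible for the defining optimization, so $\phi^*(\alpha^{1})\le\phi^{opt}$; therefore $\phi^*(\alpha^{1})=\phi^{opt}$, which proves the claim.

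I expect the main obstacle to be this second step: arranging for each drift to hit \emph{exactly} zero while simultaneously guaranteeing that profit does not decrease and that feasibility — in particular the coupling budget constraint (\ref{eq:A2}) — is preserved. The decoupled structure of the surgery is exactly what makes it work: because we only ever reduce purchases (which relaxes (\ref{eq:A2})) and never alter sales, the drift of coordinate $n$ moves monotonically through zero, the per-coordinate operations do not interfere, and each reduction weakly helps the objective. It is worth double-checking the reduction of the limiting time averages to the $\pi$-weighted expectations (a bounded/dominated-convergence application of (\ref{eq:time-avg-p})); this is the routine step already granted by the earlier remark that these time averages are well defined for any $p$-only policy.
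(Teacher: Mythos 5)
Your proposal is correct and follows essentially the same route as the paper's own proof: compactness of the finite-dimensional space of conditional distributions $q(\bv{A},\bv{\mu}|\bv{p})$ to show the supremum is attained (the paper extracts a convergent subsequence from a maximizing sequence via Bolzano--Weierstrass, which is the same compactness argument), followed by probabilistic thinning of the purchases of each stock with strictly positive drift to force $d_n^*=0$ without decreasing profit or violating feasibility. The only cosmetic difference is that the paper writes the thinning probability explicitly as $\beta_n^*/\alpha_n^*$ where you invoke the intermediate value theorem; the underlying argument is identical.
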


\begin{proof}  
For each price vector $\bv{p}$ in the finite set $\script{P}$,  define $\Omega(\bv{p})$ as the set of 
all decision vectors $[\bv{A}; \bv{\mu}]$ that satisfy (\ref{eq:mu1}), (\ref{eq:mu2}), (\ref{eq:A1}), (\ref{eq:A2}), where $\bv{p}(t)$ is replaced
with $\bv{p}$ in (\ref{eq:mu2}) and (\ref{eq:A2}). Note that $\Omega(\bv{p})$ is finite for each $\bv{p} \in \script{P}$. 
A $p$-only policy is characterized by a conditional probability distribution $q(\bv{A}, \bv{\mu}|\bv{p})$
that satisfies: 
\begin{eqnarray} 
\sum_{[\bv{A}; \bv{\mu}] \in \Omega(\bv{p})} q(\bv{A}, \bv{\mu}|\bv{p}) = 1 & \mbox{ for all $\bv{p} \in \script{P}$} \label{eq:ponly1} \\
0 \leq q(\bv{A}, \bv{\mu}|\bv{p}) \leq 1 & \mbox{ for all $\bv{A}, \bv{\mu}, \bv{p}$} \label{eq:ponly2} \\
q(\bv{A}, \bv{\mu}|\bv{p})  = 0 & \mbox{ whenever $[\bv{A}; \bv{\mu}] \notin \Omega(\bv{p})$} \label{eq:ponly3} 
\end{eqnarray} 
where $q(\bv{A}, \bv{\mu}|\bv{p})$ is defined: 
\[ q(\bv{A}, \bv{\mu}|\bv{p}) \defequiv Pr[\bv{A}(t) = \bv{A}, \bv{\mu}(t) = \bv{\mu} \left|\right.\bv{p}(t) = \bv{p}] \]

The collection of 
values $q(\bv{A}, \bv{\mu}|\bv{p})$ for $\bv{p} \in \script{P}$ and $[\bv{A}; \bv{\mu}] \in \Omega(\bv{p})$ 
can be viewed as a finite dimensional vector defined over the compact 
set defined by (\ref{eq:ponly1})-(\ref{eq:ponly3}).   Hence, by the Bolzano-Wierstrass theorem, 
any infinite sequence of such policies must have a convergent subsequence that converges to a particular
$p$-only policy that satisfies (\ref{eq:ponly1})-(\ref{eq:ponly3}).  In particular, let 
$\bv{A}^{(k)}(t)$, $\bv{\mu}^{(k)}(t)$ be an infinite sequence of $p$-only policies defined by 
distributions $q^{(k)}(\bv{A}, \bv{\mu}|\bv{p})$ that satisfy (\ref{eq:ponly1})-(\ref{eq:ponly3}), and define: 
\begin{eqnarray*}
&&\hspace{-.3in}d_n^{(k)} \defequiv \sum_{\bv{p}\in\script{P}} \pi(\bv{p}) \sum_{[\bv{A}; \bv{\mu}] \in \Omega(\bv{p})} q^{(k)}(\bv{A}, \bv{\mu}|\bv{p})[A_n - \mu_n] \\
&&\hspace{-.3in}\phi^{(k)} \defequiv \sum_{\bv{p}\in\script{P}} \pi(\bv{p})  \sum_{[\bv{A}; \bv{\mu}] \in\Omega(\bv{p})} q^{(k)}(\bv{A}, \bv{\mu}|\bv{p})\sum_{n=1}^{N} [\mu_np_n - s_n(\mu_n)] \\
&&-  \sum_{\bv{p}\in\script{P}} \pi(\bv{p})  \sum_{[\bv{A}; \bv{\mu}] \in\Omega(\bv{p})}q^{(k)}(\bv{A}, \bv{\mu}|\bv{p})\sum_{n=1}^N[A_np_n + b_n(A_n)]
\end{eqnarray*}
It is clear that $d_n^{(k)}$ and $\phi^{(k)}$ correspond to the virtual drift of stock $n$ and the virtual profit under the 
$p$-only policy $\bv{A}^{(k)}(t)$, $\bv{\mu}^{(k)}(t)$, as defined by the time average expectations in (\ref{eq:dn-star}), (\ref{eq:phi-star}).
Assume that this infinite sequence of $p$-only policies satisfies: 
\begin{eqnarray}
&d_n^{(k)} \geq 0 \: \:  \mbox{ for all $n \in \{1, \ldots, N\}, k\in\{0, 1, \ldots\}$}& \label{eq:dn-k} \\
&\lim_{k\rightarrow\infty} \phi^{(k)} = \phi^{opt}&  \label{eq:phi-lim} 
\end{eqnarray}  
Consider now any convergent subsequence of distributions $q^{(k_m)}(\bv{A}, \bv{\mu}|\bv{p})$ that converge to 
some particular distribution $q^*(\bv{A}, \bv{\mu}|\bv{p})$ that satisfies (\ref{eq:ponly1})-(\ref{eq:ponly3}).  This defines
a single $p$-only policy.  Further, by (\ref{eq:dn-k})-(\ref{eq:phi-lim}), this $p$-only policy must satisfy: 
\[ d_n^* \geq 0 \: \: \mbox{ for all $n \in \{1, \ldots, N\}$} \: \: , \: \: \phi^* = \phi^{opt} \]

It remains only to show that the algorithm can be modified to achieve $\phi^{opt}$ with 
$d_n^* = 0$ for all $n \in \{1, \ldots, N\}$.  Suppose the current $p$-only policy 
has a stock $n \in \{1, \ldots, N\}$ such that $d_n^*>0$.  We shall create a new $p$-only policy with $d_n^*=0$, without reducing
profit.    Define: 
\begin{eqnarray*} 
 \alpha_n^*  &\defequiv& \sum_{\bv{p} \in \script{P}} \pi(\bv{p}) \sum_{[\bv{A}; \bv{\mu}]\in\Omega(\bv{p})} q^*(\bv{A}, \bv{\mu}|\bv{p})A_n \\
 \beta_n^* &\defequiv& \sum_{\bv{p} \in \script{P}} \pi(\bv{p}) \sum_{[\bv{A}; \bv{\mu}]\in\Omega(\bv{p})} q^*(\bv{A}, \bv{\mu}|\bv{p})\mu_n 
 \end{eqnarray*}
 Then $d_n^* = \alpha_n^* - \beta_n^*$, and so $\alpha_n^* > \beta_n^* \geq 0$.   Consider now a \emph{new} $p$-only policy 
 $\tilde{\bv{A}}(t)$, $\tilde{\bv{\mu}}(t)$
 defined as follows:  Define $\tilde{\bv{\mu}}(t) \defequiv \bv{\mu}^*(t)$ (so that selling decisions are the same).  Define
 $\tilde{A}_m(t) \defequiv A_m^*(t)$ 
 for all $m \neq n$.  For stock $n$, choose $\tilde{A}_n(t)$ as follows: 
 \[ \tilde{A}_n(t) \defequiv  \left\{ \begin{array}{ll}
                          A_n^*(t) &\mbox{ with probability $\beta_n^*/\alpha_n^*$} \\
                             0  & \mbox{ otherwise} 
                            \end{array}
                                 \right. \]
 Note that this new $p$-only policy satisfies the constraints (\ref{eq:mu1}), (\ref{eq:mu2}), (\ref{eq:A1}), (\ref{eq:A2}), as the original 
 policy satisfies these constraints, and we have only changed the $\bv{A}^*(t)$ decision vector
 by probabilistically setting the $n$th entry to zero. Also note that the drift for all stocks $m \neq n$ is unchanged, so that 
 $\tilde{d}_m \geq 0$ for all $m \neq n$.  Further: 
 \[ \tilde{d}_n = \alpha_n^*(\beta_n^*/\alpha_n^*) - \beta_n^* = 0 \]
 Thus, we have $\tilde{d}_m \geq 0$ for all $m \in \{1, \ldots, N\}$.  Finally, it is easy to see that this modification has not reduced  the 
 profit value, and hence it must also achieve $\tilde{\phi} = \phi^{opt}$.     If there are any remaining stocks $m$ such that $d_m^*>0$, 
 we can repeat the same modification procedure. This proves the existence of a $p$-only policy that achieves
 $\phi^{opt}$ with $d_n^* = 0$ for all $n \in \{1, \ldots, N\}$. 
  \end{proof} 
  
\begin{lem}  If the price process $\bv{p}(t)$ satisfies (\ref{eq:time-avg-p}), then 
$\phi^{opt}$ is an upper bound on the $\limsup$
  time average profit of any policy that satisfies (\ref{eq:mu1})-(\ref{eq:A2}).  In particular, if $\bv{A}(t)$ and $\bv{\mu}(t)$ are decisions
  for any policy that satisfies (\ref{eq:mu1})-(\ref{eq:A2}) for all $t \in \{0, 1, 2, \ldots\}$, then: 
  \begin{equation} \label{eq:nec1} 
   \limsup_{t\rightarrow\infty} \frac{1}{t} \sum_{\tau=0}^{t-1} \phi(\tau) \leq \phi^{opt} \: \: \mbox{ with probability 1}
   \end{equation} 
  and: 
  \begin{equation} \label{eq:nec2} 
  \limsup_{t\rightarrow\infty} \frac{1}{t}\sum_{\tau=0}^{t-1} \expect{\phi(\tau)} \leq \phi^{opt}
  \end{equation}
  \end{lem}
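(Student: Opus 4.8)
The plan is to show that the empirical behavior of \emph{any} feasible policy, viewed along a suitable subsequence, converges to a genuine $p$-only policy with non-negative drift, whose profit is therefore at most $\phi^{opt}$ by the very definition of $\phi^{opt}$. First I would use constraint (\ref{eq:mu3}) to strip the $\max[\cdot,0]$ from the dynamics (\ref{eq:q-dynamics}): since $\mu_n(\tau)\leq Q_n(\tau)$, the argument of the max is non-negative, so $Q_n(\tau+1)=Q_n(\tau)-\mu_n(\tau)+A_n(\tau)$, and telescoping gives $Q_n(t)=Q_n(0)+\sum_{\tau=0}^{t-1}[A_n(\tau)-\mu_n(\tau)]$. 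Non-negativity of $Q_n(t)$ then yields $\frac{1}{t}\sum_{\tau=0}^{t-1}[A_n(\tau)-\mu_n(\tau)]\geq -Q_n(0)/t$, an asymptotic non-negative-drift constraint that every feasible policy must obey.

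Next I would form empirical distributions. Let $\pi_t(\bv{p})$ be the fraction of slots in $\{0,\ldots,t-1\}$ with $\bv{p}(\tau)=\bv{p}$, and let $q_t(\bv{A},\bv{\mu}\mid\bv{p})$ be the fraction of those slots on which the policy chose $[\bv{A};\bv{\mu}]$. Because the actual decisions satisfy (\ref{eq:mu1}), (\ref{eq:mu2}), (\ref{eq:A1}), (\ref{eq:A2}), every observed pair lies in $\Omega(\bv{p})$, so $q_t$ obeys (\ref{eq:ponly1})--(\ref{eq:ponly3}) and is a bona fide $p$-only policy. Grouping the slot sums by the value of $(\bv{p},\bv{A},\bv{\mu})$ shows that the time-average profit $\frac{1}{t}\sum_{\tau<t}\phi(\tau)$ and the time-average drift $\frac{1}{t}\sum_{\tau<t}[A_n(\tau)-\mu_n(\tau)]$ equal \emph{exactly} the $p$-only functionals (\ref{eq:phi-star}) and (\ref{eq:dn-star}) evaluated at the pair $(\pi_t,q_t)$.

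Then I would run a compactness argument on a sample path where (\ref{eq:time-avg-p}) holds (probability $1$). Choose $t_k\rightarrow\infty$ realizing $\limsup_t\frac{1}{t}\sum_{\tau<t}\phi(\tau)$; since the vectors $q_{t_k}$ lie in the compact set cut out by (\ref{eq:ponly1})--(\ref{eq:ponly3}), Bolzano--Weierstrass gives a sub-subsequence along which $q_{t_k}\rightarrow q^*$, while $\pi_{t_k}\rightarrow\pi$ by (\ref{eq:time-avg-p}). As the profit and drift functionals are finite sums of bounded per-slot quantities against $(\pi,q)$, continuity gives that the profit $\phi^*$ of $q^*$ equals $\limsup_t\frac{1}{t}\sum_{\tau<t}\phi(\tau)$ and that each drift satisfies $d_n^*=\lim_k\frac{1}{t_k}\sum_{\tau<t_k}[A_n(\tau)-\mu_n(\tau)]\geq 0$. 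Hence $q^*$ is a $p$-only policy with $d_n^*\geq 0$, so $\phi^*\leq\phi^{opt}$ by definition of $\phi^{opt}$, which is (\ref{eq:nec1}). I expect the main obstacle here to be the subsequence bookkeeping: one must first pass to a sequence realizing the $\limsup$ of the profit and only \emph{then} thin it to a convergent sub-subsequence of the $q_t$'s, checking that the profit still converges to the $\limsup$ while the drift limit stays non-negative (the slots with $\pi_t(\bv{p})=0$, where $q_t(\cdot\mid\bv{p})$ is arbitrary, must also be dispatched, but they contribute nothing to the functionals).

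Finally, (\ref{eq:nec2}) follows from (\ref{eq:nec1}) by boundedness. The per-slot profit obeys $|\phi(\tau)|\leq\sum_{n=1}^N[2\mu_n^{max}p_n^{max}+s_n^{max}+b_n^{max}]$, so the averages $\frac{1}{t}\sum_{\tau<t}\phi(\tau)$ are uniformly bounded above; the reverse Fatou lemma then gives $\limsup_t\frac{1}{t}\sum_{\tau<t}\expect{\phi(\tau)}=\limsup_t\expect{\frac{1}{t}\sum_{\tau<t}\phi(\tau)}\leq\expect{\limsup_t\frac{1}{t}\sum_{\tau<t}\phi(\tau)}\leq\phi^{opt}$, where the last inequality is (\ref{eq:nec1}) applied almost surely.
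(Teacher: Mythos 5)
Your proposal is correct and follows essentially the same route as the paper's proof in Appendix C: empirical conditional distributions over the finite sets $\Omega(\bv{p})$, a Bolzano--Weierstrass extraction along a subsequence realizing the $\limsup$, identification of the limit as a $p$-only policy with $d_n^* \geq 0$, and a bounded-convergence step for the expected version. Your telescoping of the queue dynamics (retaining the $-Q_n(0)/t$ term) is in fact slightly more careful than the paper's bare assertion that $\sum_{\tau} A_n(\tau) \geq \sum_{\tau} \mu_n(\tau)$, and your use of the reverse Fatou lemma in place of dominated convergence is an immaterial variation.
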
 
  \begin{proof} 
  We prove only (\ref{eq:nec1}) (the result (\ref{eq:nec2}) follows from (\ref{eq:nec1}), for example, using the Lebesgue Dominated
  Convergence Theorem with the observation that $0 \leq \phi(\tau)  \leq  \sum_{n=1}^N p_n^{max} \mu_n^{max}$).
  Because the algorithm can never sell more stock than it has, for a given time $t$ we have: 
  \begin{equation} \label{eq:cannot-sell-more} 
   \sum_{\tau=0}^{t-1} A_n(\tau) \geq \sum_{\tau=0}^{t-1} \mu_n(\tau) \: \: \mbox{ for all $n \in \{1, \ldots, N\}$}
   \end{equation} 
   Now for each $\bv{p} \in \script{P}$, define $T_{\bv{p}}(t)$ as the set of slots $\tau \in \{0, 1, \ldots, t-1\}$ 
   for which $\bv{p}(\tau) = \bv{p}$, and define $|T_{\bv{p}}(t)|$ as the total number of such slots. Define $\script{P}(t)$ as the set of all price vectors $\bv{p} \in \script{P}$ for which 
   $|T_{\bv{p}}(t)| > 0$. 
   We thus have: 
   \begin{eqnarray*}
   \frac{1}{t}\sum_{\tau=0}^{t-1} \phi(\tau) = \sum_{\bv{p}\in\script{P}(t)} \frac{|T_{\bv{p}}(t)|}{t} \frac{1}{|T_{\bv{p}}(t)|}\sum_{\tau \in T_{\bv{p}}(t)} \phi(\tau) 
   \end{eqnarray*}
   However, for each $\bv{p} \in \script{P}(t)$ we have: 
   \begin{eqnarray*}
   \frac{1}{|T_{\bv{p}}(t)|}\sum_{\tau \in T_{\bv{p}}(t)} \phi(\tau) = \\
   \frac{1}{|T_{\bv{p}}(t)|} \sum_{[\bv{A}; \bv{\mu}] \in \Omega(\bv{p})} N(\bv{A}, \bv{\mu}, \bv{p}, t)\hat{\phi}(\bv{A}, \bv{\mu}, \bv{p}) 
   \end{eqnarray*}
   where $N(\bv{A}, \bv{\mu}, \bv{p}, t)$ is defined as the number of times during the interval $\{0, \ldots, t-1\}$ 
   that the algorithm 
   selects $\bv{A}(\tau) = \bv{A}$, $\bv{\mu}(\tau) = \bv{\mu}$ when $\bv{p}(\tau) = \bv{p}$, and where $\hat{\phi}(\bv{A}, \bv{\mu}, \bv{p})$ 
   is given by: 
   \begin{eqnarray*}
    \hat{\phi}(\bv{A}, \bv{\mu}, \bv{p}) \defequiv \sum_{n=1}^N [\mu_np_n - s_n(\mu_n)] 
    - \sum_{n=1}^N[ A_np_n + b_n(A_n)] 
    \end{eqnarray*}
   The values $N(\bv{A}, \bv{\mu}, \bv{p}, t)$
   define a $p$-only policy, given by distribution: 
   \[ q^{(t)}(\bv{A}, \bv{\mu}|\bv{p}) =\left\{ \begin{array}{ll}
                          \frac{N(\bv{A}, \bv{\mu}, \bv{p}, t)}{|T_{\bv{p}}(t)|}   &\mbox{ if $|T_{\bv{p}}(t)|>0$} \\
                             0  & \mbox{ otherwise} 
                            \end{array}
                                 \right. \]
   Further, this distribution satisfies the constraints (\ref{eq:ponly1})-(\ref{eq:ponly3}) required for $p$-only policies. 
   Now let $t_k$ be an infinite subsequence over which the $\limsup$ time average profit is achieved, so that: 
   \[ \limsup_{t\rightarrow\infty} \frac{1}{t}\sum_{\tau=0}^{t-1} \phi(\tau) = \lim_{k\rightarrow\infty} \frac{1}{t_k}\sum_{\tau=0}^{t_k} \phi(\tau) \]
   We thus have: 
   \begin{eqnarray}
  &&\hspace{-.4in} \frac{1}{t_k}\sum_{\tau=0}^{t_k} \phi(\tau) = \nonumber \\
  && \hspace{-.2in}\sum_{\bv{p}\in\script{P}(t_k)} \frac{|T_{\bv{p}}(t_k)|}{t_k} 
   \sum_{[\bv{A}; \bv{\mu}]\in\Omega(\bv{p})} q^{(t_k)}(\bv{A}, \bv{\mu}|\bv{p})\hat{\phi}(\bv{A}, \bv{\mu}, \bv{p}) \label{eq:c1} 
   \end{eqnarray}
   Further, with this notation, from (\ref{eq:cannot-sell-more}) we have for each $n \in \{1, \ldots, N\}$: 
   \begin{eqnarray}
&&\hspace{-.45in}0 \leq    \sum_{\tau=0}^{t_k-1} [A_n(\tau) - \mu_n(\tau)] \nonumber \\
&&\hspace{-.35in}=  \sum_{\bv{p}\in\script{P}(t_k)} \frac{|T_{\bv{p}}(t_k)|}{t_k} 
   \sum_{[\bv{A}; \bv{\mu}]\in\Omega(\bv{p})} q^{(t_k)}(\bv{A}, \bv{\mu}|\bv{p})[A_n - \mu_n] \label{eq:c2} 
   \end{eqnarray}
   Because $\script{P}$ is finite and $\Omega(\bv{p})$ is finite for each $\bv{p} \in \script{P}$, the $p$-only distributions
   $q^{(t_k)}(\bv{A}, \bv{\mu}|\bv{p})$ can be viewed as an infinite sequence of vectors in a compact set defined
   by (\ref{eq:ponly1})-(\ref{eq:ponly3}), and hence have
   a convergent subsequence that converges to a distribution $q^*(\bv{A}, \bv{\mu}|\bv{p})$ that is in the 
   set (\ref{eq:ponly1})-(\ref{eq:ponly3}). 
   Note by (\ref{eq:time-avg-p})  that  for each $\bv{p} \in \script{P}$ we have: 
  \[ \lim_{t\rightarrow\infty} \frac{|T_{\bv{p}}(t)|}{t} = \pi(\bv{p}) \: \: \mbox{ with probability 1} \]
   Taking limits of (\ref{eq:c1}) and (\ref{eq:c2}) thus yields: 
    \begin{eqnarray*}
 && \limsup_{t\rightarrow\infty} \frac{1}{t}\sum_{\tau=0}^{t-1}\phi(\tau) = \\
  &&\sum_{\bv{p}\in\script{P}} \pi(\bv{p}) 
   \sum_{[\bv{A}; \bv{\mu}]\in\Omega(\bv{p})} q^{*}(\bv{A}, \bv{\mu}|\bv{p})\hat{\phi}(\bv{A}, \bv{\mu}, \bv{p}) \label{eq:cc1} 
   \end{eqnarray*}
   and for all $n \in \{1,\ldots, N\}$: 
    \begin{eqnarray*}
    0  \leq     \sum_{\bv{p}\in\script{P}}\pi(\bv{p}) 
   \sum_{[\bv{A}; \bv{\mu}]\in\Omega(\bv{p})} q^{*}(\bv{A}, \bv{\mu}|\bv{p})[A_n - \mu_n] \defequiv d_n^*
   \end{eqnarray*}
   This defines a $p$-only policy that achieves the $\limsup$ time average of $\phi(t)$, while yielding 
   $d_n^*\geq 0$ for all $n$.   It follows that the $\limsup$ time average of $\phi(t)$ must be less than or equal to the
   value $\phi^{opt}$ defined as the largest such value achievable over $p$-only policies that satisfy $d_n^* \geq 0$ for all $n$.
  \end{proof}
\bibliographystyle{unsrt}
\bibliography{../../latex-mit/bibliography/refs}
\end{document}